\newcounter{linenum}
\newcommand{\bigabs}[1]{{\big\lvert #1\big\rvert}}
\newcommand{\ket}[1]{\left\lvert #1 \right\rangle}
\newcommand{\bra}[1]{\left\langle #1 \right\lvert}
\DeclareMathOperator{\Tr}{\operatorname{Tr}}
\providecommand{\abs}[1]{\lvert#1\rvert}
\def\XA {\mathcal{X}^A}
\def\xzdeterminedset {{\mathcal Q}}
\def\H {{\mathcal H}}
\newtheorem{theorem}{Theorem}
\newtheorem{lemma}{Lemma}
\newtheorem{corollary}{Corollary}
\newtheorem{definition}{Definition}
\newcommand{\QPIP}{\textsf{QPIP}\xspace}
\newcommand{\MIPk}[1]{\textsf{MIP[#1]}\xspace}
\newcommand{\MIPs}{\textsf{MIP\large{*}}\xspace}
\begin{document}

\allowdisplaybreaks[3]
\frenchspacing

\title{Robustness and device independence of \\verifiable blind quantum computing}

\author[1]{Alexandru Gheorghiu\footnote{a.gheorghiu@sms.ed.ac.uk}}
\author[1,2]{Elham Kashefi}
\author[1]{Petros Wallden}
\affil[1]{School of Informatics, University of Edinburgh,}
\affil[ ]{10 Crichton Street, Edinburgh EH8 9AB, UK}
\affil[2]{CNRS LTCI, Departement Informatique et Reseaux,}
\affil[ ]{Telecom ParisTech, Paris CEDEX 13, France}

\hyphenation{avenues}
\hyphenation{research}
\hyphenation{interact}
\hyphenation{machine}
\hyphenation{machines}
\hyphenation{giving}
\hyphenation{paper}
\hyphenation{encoun-ter}
\hyphenation{encoun-ters}
\hyphenation{expe-ri-ence}
\hyphenation{analyse}
\hyphenation{analysis}
\hyphenation{remains}
\hyphenation{logical}
\hyphenation{cannot}

\clubpenalty 10000
\widowpenalty 10000

\date{}

\maketitle

\begin{abstract}
Recent advances in theoretical and experimental quantum computing bring us closer to scalable quantum computing devices. This makes the need for protocols that verify the correct functionality of quantum operations timely and has led to the field of quantum verification. In this paper we address key challenges to make quantum verification protocols applicable to experimental implementations. We prove the robustness of the single server verifiable universal blind quantum computing protocol of Fitzsimons and Kashefi \cite{fk} in the most general scenario. This includes the case where the purification of the deviated 
input state is in the hands of an adversarial server. The proved robustness property 
allows
the composition of this protocol with a device-independent state tomography protocol that we give, which is based on the rigidity of CHSH games as proposed by Reichardt, Unger and Vazirani~\cite{ruv2}. The resulting composite protocol has lower round complexity for the verification of entangled quantum servers with 
a classical verifier and, as we show, can 
be made fault tolerant.
\end{abstract}

\vspace{2pc}
\noindent{\it Keywords}: Delegated Quantum Computation, Quantum Verification, Device Independence, Composition, Fault Tolerance

\section{Introduction}
While the prospect of commercially available universal quantum computing is still distant, a number of experiments involving multi-qubit systems have recently been developed.
Irrespective of their applications, these technologies require methods and tools for verifying the correctness of their operations. Assuming that quantum computing is more powerful than classical computing, a simulation-based approach for quantum verification of devices with sufficiently large number of qubits, becomes practically impossible. Aaronson and Arkhipov showed in \cite{aaronson} that even a rudimentary quantum computer constructed with linear-optical elements cannot be efficiently simulated. Similarly, verifying the correct preparation of a general $n$ qubit state via state tomography also involves exponential overhead since it requires collecting statistics from $4^n$ separate observables \cite{nielsenchuang}.

The verification of quantum devices becomes more complicated when the functionality involves cryptographic primitives. In these cases, incorrect operations could be the result of actions of an adversary. Thus it becomes necessary to guarantee the security of the application under certain assumptions about the devices.
Ideally a protocol should remain secure even if the devices are faulty and partially controlled by adversaries. This would lead to a solution that is \emph{device-independent} and \emph{robust}. However, generating such protocols has
proven difficult. Even in quantum key distribution, a complete proof of security for a device-independent protocol,
in the presence of noise, has been achieved only
recently
\cite{deviceindependentqkd}.

The issue of verification needs to be resolved to be able to exploit successfully any future quantum computers. 
Moreover, one expects that the first large scale quantum devices are unlikely to be personal computers. Instead, they will probably function as \emph{servers} to which \emph{clients} can connect and request the computation of some difficult problem. The client may also require his computation to be private, i.e. require that the server does not learn anything about it. We should therefore construct protocols that verify an arbitrary delegated quantum computation and prove the security and correctness of this verification technique.

The approaches that have been so far successful are those based on \emph{interactive proof systems} \cite{ip, qip}, where a \emph{trusted}, computationally limited verifier (also known as client, in a cryptographic setting) exchanges messages with an \emph{untrusted}, powerful quantum prover, or multiple provers (also known as servers). The verifier attempts to certify that, with high probability, the provers are performing the correct quantum operations. Because we are dealing with a new form of computation, the verification protocols, while based on established techniques are fundamentally different from their classical counterparts. A number of quantum verification protocols have been developed,
for different functionalities of devices and using a variety of different strategies to achieve verification \cite{fk,ruv2,bfk,eleni, abe, bosonsampling, mckague,KKD14, twopartyeval, cryptoquantumw, BGS13, efk}. The assumptions made depend  on the specific target and desired properties of the protocol. For example, if
the emphasis is on creating an immediate practical implementation,
then this should be reflected in the technological requirements leading to a testable application with current technology \cite{efk}.
Alternatively, if the motivation is to prove a theoretical result, we may relax some requirements such as efficient scaling \cite{ruv2}. An important open problem in the field of quantum verification, is whether a scheme with a fully classical verifier is possible \cite{falsifiable, openproblem}. We know, however, that verification \emph{is} possible in the following two scenarios:
\begin{enumerate}
\item A verifier with minimal quantum capacity (ability to prepare random single qubits) and a single quantum prover \cite{fk}. This is the Fitzsimons and Kashefi (FK) protocol.
\item A fully classical verifier and two non-communicating quantum provers that share entanglement \cite{ruv}. This is the Reichardt, Unger and Vazirani (RUV) protocol.
\end{enumerate}

One of our objectives is to obtain a \emph{device independent} (allowing untrusted quantum devices) version of the FK protocol, by composing it with the RUV protocol.
The additional properties we aim to achieve from this composition are \emph{fault tolerance} (allowing noisy devices) and \emph{reduced round complexity}.
Composing protocols can indeed be fruitful since it could lead to new protocols that inherit the advantages of both constituents. 
The universal composabillity framework, allowing for secure compositions, has been successfully extended to the quantum realm \cite{compos1, compos2, compos3}. Recently, the security of single server verifiable universal blind quantum computing protocols has been demonstrated in an abstract cryptographic framework \cite{DFPR13} that is also known to be equivalent to the simulation-based composability framework.
However this setting does not fulfil the necessary requirements for our composition. 
This is because, when combining a single server verification scheme (FK) with an entangled server scheme (RUV), there exists the possibility of correlated attacks, which are not explicitly treated in the composability framework. Such attacks can occur when an untrusted server's strategy is correlated with deviations in the protocol's input state. Our robustness result resolves this issue in the stand alone composition setting and the same technique could potentially be extended to the composable framework of \cite{DFPR13}, thus resolving the problem of correlated attacks.

The type of composition that we require is sequential.
We take the output of the first protocol and use it as input for the second. However, in general, the output of the first protocol is not necessarily an acceptable input for the second protocol. In particular, since the verification protocols are probabilistic, their outputs typically deviate by a small amount from the ideal one. Thus, it is necessary that the second protocol remain secure even if the input is slightly deviated from the ideal one. Moreover, we make sure that adversaries cannot exploit any correlations between the deviated input and their strategy to compromise the security of the protocol. Therefore to securely compose the protocols, we need to address these new type of attacks.
The main results of this paper can be summarised as follows:
\begin{enumerate}
\item We prove that the FK protocol is strongly robust, see Theorem \ref{t-robust}. First, we show that FK can tolerate inputs which deviate from their ideal values by a small amount, see Lemma \ref{l-weakrobust}. However, for composition with other protocols a stronger property is needed. We therefore proceed to show that the FK protocol is robust even when the deviated input is correlated with an external system possessed by an adversary (for example the provers' private systems in the RUV protocol), see Lemma \ref{l-strongrobust}.

\item An immediate consequence of the robustness theorem is that we can construct a composite protocol combining RUV with FK. The required input states for the FK protocol are prepared via the state tomography sub-protocol of RUV. Our composite protocol inherits the device independence property of RUV, see Theorem \ref{t:composite}. Additionally, since we do not require the full RUV protocol, the composite protocol also has an improved round complexity.

\item Lastly, we address the distinction between robustness and fault tolerance and show how the FK protocol can be made fault tolerant,
thereby making our proposed composite approach fault tolerant as well.
\end{enumerate}
In Section~\ref{sect:pre} we give some preliminaries. In Section~\ref{sect:main} we present the main results, that we summarised above, and outline their proofs. In particular we give robustness in Subsection~\ref{sect:main-robust}, composition in Subsection~\ref{sect:main-comp} and fault tolerance in Subsection~\ref{sect:main-ft}.  Further details of the proofs are given in Section~\ref{sect:robust} for robustness, in Section~\ref{sect:composite} for composition and in Section~\ref{sect:ft} for fault tolerance. We conclude in Section~\ref{sect:conclusion}.

\subsection{Preliminaries}\label{sect:pre}
We first introduce the relevant concepts used in describing verification protocols and then briefly present the
two protocols we will built on (FK and RUV).

\subsubsection{Interactive Proof Systems}
As explained in \cite{ip, abe}, a language $\mathcal{L}$ is said to admit an interactive proof system if there exists a computationally unbounded
prover $\mathcal{P}$ and a \BPP{}  verifier $\mathcal{V}$, such that for any $x \in \mathcal{L}$, $\mathcal{P}$ convinces $\mathcal{V}$
that $x \in \mathcal{L}$ with probability $\geq \frac{2}{3}$. Additionally, when $x \not\in \mathcal{L}$, $\mathcal{P}$ convinces $\mathcal{V}$
that $x \in \mathcal{L}$ with probability $\leq \frac{1}{3}$. Mathematically, we have the following two conditions\footnote{
Note that completeness can be viewed as the probability of the verifier accepting when the prover is honest. Similarly, soundness
is the probability of accepting, when the prover is dishonest.}:
\begin{itemize}
\item Completeness: $Pr( \mathcal{V} \leftrightarrow \mathcal{P}$ accepts $ x \; \; | \; \; x \in \mathcal{L}) \geq \frac{2}{3}$
\item Soundness: $Pr( \mathcal{V} \leftrightarrow \mathcal{P}$ accepts $ x \; \; | \; \; x \not\in \mathcal{L}) \leq \frac{1}{3}$
\end{itemize}
The set of languages which admit such an interactive proof system define the complexity class \textsf{IP}. We are interested in the case
when the prover is a polynomial-time quantum computer (i.e. a \BQP{} machine).
In \cite{abe}, the first definition of such a quantum interactive proof system was given, which we use here:
\begin{definition} \label{def:qpip}
\cite{abe} Quantum Prover Interactive Proof (\textsf{QPIP}) is an interactive proof system with the following properties:
\begin{itemize}
\item[(i)] The prover is computationally restricted to \BQP{}.
\item[(ii)] The verifier is a hybrid quantum-classical machine. Its classical part is a \BPP{} machine. The quantum part is a
register of c qubits (for some constant c), on which the prover can perform arbitrary quantum operations. At any
given time, the verifier is not allowed to possess more than c qubits. The interaction between the quantum and
classical parts is the usual one: the classical part controls which operations are to be performed on the quantum
register, and outcomes of measurements of the quantum register can be used as input to the classical machine.
\item[(iii)] There are two communication channels: one quantum and one classical.
\end{itemize}
The completeness and soundness conditions are identical to the \textsf{IP} conditions.
\end{definition}

We are also interested in interactive protocols that use more than one prover.
There are only two differences, first that the verifier can interact with multiple provers instead of just one, and second that the provers are not allowed to
communicate. The conditions for completeness and soundness
remain unchanged.
The analogous complexity class that involves multiple provers is called \emph{Multi-Prover Interactive Proof System} and denoted \MIP{} \cite{mip}.
It is defined as the set of all languages which admit an
interactive proof system with one or more non-communicating provers. If the number of provers is fixed to be $k$, the corresponding complexity class is
\MIPk{$k$}. A closely related class is \MIPs where the multiple non-communicating provers share entangled states.

In all of these cases, the verifier is essentially \emph{delegating} a difficult computation to the prover(s). This computation can be universal
with respect to the computation model of the prover(s). In our case, this means universal for polynomial-time quantum computations.
The number of classical messages exchanged between the verifier and the prover, throughout the run of the protocol, as a function of the input size
is known as the \emph{round complexity} of the protocol.

\subsubsection{Quantum Protocols}
Throughout this subsection, we assume the reader is familiar with the teleportation-based and more generally measurement-based quantum computing (MBQC) models, described in detail in \cite{onewaycomputer, mbqc}.

We first summarise the FK protocol \cite{fk} which is a \QPIP protocol.  It is also known as unconditionally secure,
verifiable, universal blind quantum computing. The protocol is ``blind'' which means that no information about the computation is leaked to the prover, apart from its size. This property can be exploited by allowing the verifier to insert hidden ``traps'' within the computation. 
The traps are deterministic tests which the verifier can perform in order to verify that the prover is not deviating from the protocol. Blindness ensures that the traps are indistinguishable from the computation.

The basic idea of this protocol is that the verifier prepares and sends qubits to the prover. The prover entangles these qubits and then performs adaptive measurements (sending the measurement outcomes to the verifier) that will overall implement a certain unitary operation, as in the MBQC model of computation.
The traps are single isolated qubits, disentangled from the rest of the computation, and when measured in suitable bases give deterministic outcomes that are known to the verifier (but not to the prover). Since the prover is completely blind and does not know which qubits are traps  and which are part of the actual computation, any attempt to cheat has some probability to affect the trap and thus be detected.  

The FK protocol is based on a universal resource state for the MBQC model, known as the \emph{dotted-complete graph state}.
The details of this resource state are not crucial for understanding this paper, apart from the fact that, as part of the FK protocol, the appropriate operators are performed by the untrusted server to prepare this generic state. In particular, a series of controlled-$Z$ operators are performed by the server, according to the dotted-complete graph structure, for entangling the individual qubits prepared in advance by the verifier. These initial qubits, that are sometimes referred to as the input of the FK protocol, are sent to the server at the first stage of the protocol. This fact is used to prove some basic properties needed for our main robustness result, see Theorem \ref{thm:stabilizer}. Therefore, for the purpose of completeness, we state here the definition of the dotted-complete graph state, taken from \cite{fk}, see also Figure~\ref{f-dgrapg}.

\begin{definition} \label{d-dotted}
\cite{fk}
Let $K_N$ denote the complete graph of $N$ vertices. Define the \emph{dotted-complete graph}, denoted as $\tilde{K}_N$, to be a graph where every edge in $K_N$ is replaced with a new vertex connected to the two vertices originally joined by that edge. We call the quantum state corresponding to $\tilde{K}_N$ the \emph{dotted-complete graph state}. This multi-partite entangled state is prepared by replacing every vertex with a qubit in the state $\ket +$ and applying a controlled-$Z$ operator for every edge in the graph.
\end{definition}

\begin{figure}[h!]
  \centering
  \includegraphics[width=0.8\textwidth]{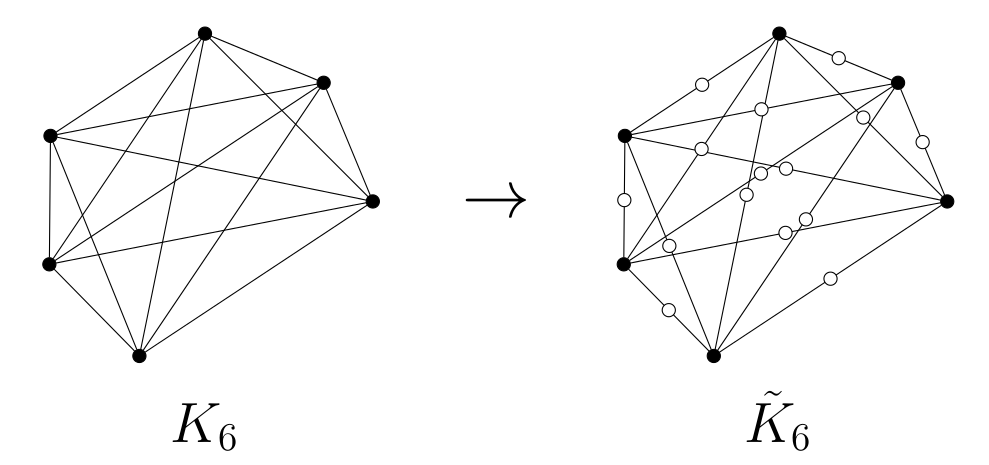}
  \caption{An example of a complete graph, $K_6$, and its corresponding dotted-complete graph $\tilde{K}_6$.}
\label{f-dgrapg}
\end{figure}

The family of dotted-complete graph states is universal for quantum computation. Moreover, any other graph state could be obtained from a large enough dotted-complete graph state by applying the appropriate Pauli measurements over some of the vertices (the ones shown in white, in Figure~\ref{f-dgrapg}). 
Concretely, in order to construct any desired graph of $N$ vertices from a dotted-complete graph $\tilde{K}_N$, Pauli $Y$ measurements are performed in order to keep a specific edge, and Pauli $Z$ measurements in order to remove it (alternatively, one could use the states $\Ket{0}$, $\Ket{1}$ for the edges which should be removed, instead of performing a Pauli $Z$ measurement). This can be done blindly in order to hide the target graph.
The detailed construction is not important for the rest of this paper and hence omitted (see Section 5 in \cite{fk}). We give a brief description of FK, shown here as Protocol~\ref{prot:fk}. \\
According to Definition~\ref{def:qpip} the quantum channel between verifier and prover is one-way (from the verifier to the prover). Moreover, the constant $c$, representing the number of qubits that the verifier can possess at any given time, is exactly one. We refer the reader to \cite{fk} for a more in depth description of the protocol and its associated concepts. However, we recall the key properties of the protocol in the following lemma.

\begin{algorithm}[t!]
\caption{Fitzsimons, Kashefi \QPIP Protocol from \cite{fk}}
 \label{prot:fk}
 \vskip 0.2 cm
\textbf{Assumptions} \\
The verifier wants to delegate a quantum computation described by the graph $G$ and specific measurement angles $\phi_i$ $\in \{0, \pi/4 ... 7\pi/4 \}$ chosen to define a desired computation. He chooses a \emph{security parameter} $d$ and
encodes this graph into the topological error correcting scheme of Raussendorf, Harrington and Goyal \cite{rhg},
that can correct or detect errors with weight less than $d$.
The encoded graph, $\mathcal{G}$, will have $N$ qubits.
He then considers a random hidden partitioning of the vertices of the dotted-complete graph $\widetilde{\mathcal{K}}_{3N}$ into $3$ subgraphs: the computation graph $\mathcal{G}$, and two different types of isolated traps which the verifier will use in order to test the prover's honesty in performing measurements. All measurements are performed in the $XY$-plane.\\

\textbf{Protocol}
\begin{enumerate}
\item Verifier prepares and sends the $3N(3N + 1)/2$ qubits to the prover (the number of vertices in $\widetilde{\mathcal{K}}_{3N}$). These consist of dummy qubits which are either $\Ket{0}$ or $\Ket{1}$
(isolating their neighbouring vertices from the rest of the computation) and computation or trap qubits which are of the form $\Ket{+_{\theta}} = \frac{1}{\sqrt{2}}(\Ket{0} + e^{i\theta} \Ket{1})$,
where $\theta \in \{0, \pi/4 ... 7\pi/4 \}$.
\item The prover entangles the qubits according to the structure of the $\widetilde{\mathcal{K}}_{3N}$ graph by applying controlled-Z operations between any pair of vertices that are connected with an edge.
\item For each qubit $i$, the verifier
computes the measurement angle $\delta_i = \theta_i + \phi'_i + r_i \pi$, where $\phi'_i $ is the adapted version of the computation angle $\phi_i$, and $r_i$ is a randomly chosen bit $\{0,1\}$. Adapted computation angles are used to account for corrections from previous measurements. The measurement angles $\phi_i$, for the trap qubits, are randomly fixed to be $0$ or $\pi$. However, due to blindness provided by the initial $\theta_i$ rotations in the preparation of individual qubits (Step 1 above), the value of $\delta_i$  is uniformly distributed over the set $\{0, \pi/4 ... 7\pi/4 \}$. The verifier sends these measurement angles one by one to the prover. The prover measures each corresponding qubit in the $\Ket{+_{\delta_i}}$, $\Ket{-_{\delta_i}}$ basis, and sends his reply $b_i$ to the verifier.
\item The verifier accepts if for all trap qubits, the reported measurement outcome $b_t$ is the same as the expected outcome $r_t$.
\end{enumerate}
\end{algorithm}

\begin{lemma}
Assuming the verifier wants to delegate the computation of a circuit of size $N$, the FK protocol has $O(N^2)$ round complexity and uses
$O(N^2)$ qubits with completeness being exactly 1 while the soundness is upper bounded by $(2/3)^{\lceil \frac{2d}{5} \rceil}$, where $d$ is the security parameter.

\end{lemma}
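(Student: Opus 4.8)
The plan is to dispatch the four claims separately, treating the resource counts and completeness as direct bookkeeping on the construction and reserving the real work for the soundness bound.

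First, the resource counts follow from the geometry of the dotted-complete graph $\widetilde{\mathcal{K}}_{3N}$ on which the protocol runs. The complete graph $K_{3N}$ has $3N$ vertices and $\binom{3N}{2} = 3N(3N-1)/2$ edges, and dotting replaces each edge by a fresh vertex, so the total number of qubits prepared and sent in Step~1 is $3N + 3N(3N-1)/2 = 3N(3N+1)/2 = O(N^2)$. Since the quantum channel is one-way and each qubit is then measured using an angle $\delta_i$ transmitted individually in Step~3 (with a single reply $b_i$), the number of classical messages grows linearly in the number of qubits, giving $O(N^2)$ round complexity. Next, for completeness I would show an honest prover always passes every trap. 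A trap is an isolated qubit prepared as $\ket{+_{\theta_t}}$ whose graph neighbours are dummies in $\ket{0}$ or $\ket{1}$; after the controlled-$Z$ step it remains a single-qubit state $\ket{+_{\theta'_t}}$ for a $\theta'_t$ the verifier can compute. With measurement angle $\delta_t = \theta'_t + \phi_t + r_t\pi$ and $\phi_t \in \{0,\pi\}$, measuring in the $\{\ket{+_{\delta_t}},\ket{-_{\delta_t}}\}$ basis yields a deterministic outcome equal to $r_t$ up to the known $\phi_t$ offset, so the Step~4 test succeeds with probability exactly $1$.

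The crux is the soundness bound, and here the key lever is blindness. The initial random rotations $\theta_i$ and the random bits $r_i$ make the joint distribution of everything the prover sees independent of which vertices are traps and of the true computation angles. I would exploit this symmetry through a twirling argument that collapses an arbitrary deviation of the prover into a convex combination of Pauli deviations applied to the encoded resource state. A Pauli deviation lying in the stabilizer of the underlying Raussendorf--Harrington--Goyal (RHG) cluster \cite{rhg} acts trivially and is harmless; any deviation that induces a \emph{logical} fault on the distance-$d$ RHG code must, by definition of code distance, have weight of order $d$. The final, combinatorial step is then to observe that because the logical vertices are randomly partitioned into the computation subgraph and two trap types, a high-weight deviation intersects trap positions with high probability, and I would bound the probability that such a deviation escapes detection on every trap by $(2/3)^{\lceil 2d/5\rceil}$: the base $2/3$ reflects the per-site chance of missing the relevant trap type coming from the three-way partition, and the exponent $\lceil 2d/5\rceil$ is the minimal number of independent trap sites that any weight-$d$ logical fault is forced to disturb.

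I expect the main obstacle to be precisely this last step of the soundness argument: relating the code distance $d$ to the guaranteed number of \emph{independent} trap positions that an undetectable logical attack must perturb, and verifying that the twirl legitimately reduces a general (and possibly correlated) CPTP deviation to Pauli errors without weakening the bound. The resource and completeness claims are essentially counting and a single-qubit deterministic-outcome check, but pinning down the interplay between the RHG code geometry and the random trap placement is where the exact constant $2/3$ and the exponent $\lceil 2d/5\rceil$ have to be justified carefully.
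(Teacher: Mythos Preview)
Your treatment of the qubit count, round complexity, and completeness matches the paper's proof essentially line for line: the paper also just reads off $3N(3N+1)/2$ vertices in $\widetilde{\mathcal{K}}_{3N}$, notes one classical exchange per vertex, and observes that trap measurements are deterministic for an honest prover.

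The difference is in the soundness claim. The paper does not reprove the bound at all: it simply invokes the result of \cite{fk} that the protocol is $(2/3)^{\lceil 2d/5 \rceil}$-verifiable for classical output and stops there. You are instead sketching the internals of that original argument (twirl to Pauli errors, code distance forces high weight, random three-way partition catches high-weight attacks). That sketch is broadly faithful to the FK proof strategy, and your identification of the hard step---turning the code distance $d$ into the precise exponent $\lceil 2d/5\rceil$ via the combinatorics of the random trap placement---is accurate. But none of that work is needed here: this lemma is a summary statement collecting known properties of the FK protocol, and the intended proof of the soundness clause is a one-line citation, not a rederivation.
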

\begin{proof}
It is clear from Protocol~\ref{prot:fk} that the total number of qubits used in the protocol is $3N(3N + 1)/2$,
where $N$ is the number of qubits for the encoded graph $\mathcal{G}$.
Additionally, we have the same number of rounds of classical communication, corresponding to the measurements of qubits in the dotted-complete graph
(each measurement requires $3$ classical bits to specify the measurement angle and $1$ bit to specify the outcome).
Both the overall round complexity and the required quantum resources are thus $O(N^2)$.

As described in Protocol~\ref{prot:fk}, the verifier accepts if and only if all trap measurements succeed. This is always the case if the prover is honest and follows the instructions, since the trap measurements are deterministic. Therefore, the probability that the verifier accepts when the prover is honest is exactly $1$ (completeness). On the other hand, it is shown in \cite{fk} that in the case of classical output, the protocol is $(2/3)^{\lceil \frac{2d}{5} \rceil}$-verifiable, meaning the soundness is upper bounded by $(2/3)^{\lceil \frac{2d}{5} \rceil}$.
\end{proof}

Next we summarise the RUV protocol \cite{ruv2} which is a \MIPs protocol. It relies on the rigidity of CHSH games \cite{CHSH69, ruv} to test the honesty of the provers (c.f. traps at the FK protocol),
and on
gate teleportation to perform the computation. In particular,
the verifier directs the provers to perform a series of local measurements of their parts of the shared entangled states. The purpose of this is to check for statistical violations of Bell's inequality. At the same time, the verifier makes the provers teleport quantum states into gates in order
to perform his desired quantum computation \cite{ruv}. Importantly, the verifier alternates between these strategies in such a way that the provers are not aware (they are blind) in which strategy their measurement belongs. Moreover, the two provers cannot use previous results in order to deviate from the protocol (i.e. there is no adaptive cheating strategy). This is summarised as Protocol~\ref{prot:ruv}.
Due to the rigidity of the CHSH games, as proved in \cite{ruv2}, the verifier can determine if the two provers are being honest or not from the statistical outcomes. To ensure the verification of universal computations, the resource preparation stage of the protocol will prepare multiple copies of the states:
\begin{equation*} \label{eqn:state}
\big\{
   P \ket 0 , \; (H P)_2 \ket{\psi^*}, \; (G Y)_2 \ket{\psi^*}, \; \mathrm{CNOT}_{2,4} P_2 Q_4 (\ket{\psi^*} \otimes \ket{\psi^*}) \, : \, P, Q \in \{I, X, Y, Z\}
\big\}
\end{equation*}
Here, $\Ket{\psi^*}$ denotes the Bell state $\frac{\Ket{00} + \Ket{11}}{\sqrt{2}}$ which is shared among the two provers. In fact the provers share multiple copies of $\Ket{\psi^*}$, each prover having one qubit from each Bell pair. Without loss of generality,
we can assume that prover $1$ has the first qubit and prover $2$ has the second qubit.
The Hadamard, Phase and controlled-Not gates (denoted as $ \{ H, G, CNOT \} $) constitute a universal gate set for quantum computation. The subscript indices indicate on which qubits the gate acts. An arbitrary quantum circuit is thus simulated by repeatedly doing
gate teleportations, while keeping the computation blind from the two provers the entire time.

\begin{lemma}\label{l:ruv-complexity}
Assuming the verifier wants to delegate the computation of a circuit of size $n$, the round complexity of the RUV protocol is $O(n^c)$,
where there exists some constant $c$, such that $c \geq 8192$.
\end{lemma}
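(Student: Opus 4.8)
The plan is to establish the claimed round complexity by tracing through the nested sub-protocols that comprise the RUV scheme of \cite{ruv2} and accumulating the polynomial overhead that each of them contributes. This is not a fresh derivation but rather a careful accounting over the quantitative bounds already proved in \cite{ruv2}; the purpose of the lemma is to extract the degree of the governing polynomial and to make explicit that it is enormous (at least $8192$), thereby motivating the round-complexity savings of the composite protocol we construct later.

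First I would recall the high-level architecture of the protocol. The verifier simulates an $n$-gate circuit by gate teleportation through the shared entanglement of the two provers, while interleaving CHSH test rounds whose statistics certify --- via the rigidity theorem for CHSH games --- that the provers' shared state and measurements are close to the ideal Bell pairs and Pauli observables, the near-optimal winning probability being $\cos^2(\pi/8)$. The dominant contributors to the total round count are therefore (i) the number of CHSH games demanded by the rigidity theorem to guarantee a target closeness $\epsilon$ to the ideal strategy, (ii) the state- and process-tomography sub-protocols, whose required precision fixes $\epsilon$, and (iii) the fault-tolerant encoding needed to keep the accumulated error below the simulation threshold across all $n$ gates.

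Second, for each of these components I would write down its round complexity as a function of its input parameters, reading off the exponents directly from the relevant lemmas of \cite{ruv2}. The crucial quantitative input is the robustness of the rigidity theorem: certifying the strategy to within $\epsilon$ requires a number of games scaling as $\epsilon^{-k}$ for a large constant $k$. I would then trace the dependencies in the correct order --- the fidelity needed to simulate $n$ gates forces the per-gate tomographic precision, that precision forces the rigidity closeness $\epsilon = 1/\mathrm{poly}(n)$, and that value of $\epsilon$ in turn forces the number of games played per round. Because these sub-protocols are \emph{composed} rather than run side by side, their polynomial degrees multiply, which is precisely why the overall exponent is so large.

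Finally I would accumulate the exponents and verify that their product is at least $8192$. The main obstacle is exactly this bookkeeping: the constants in \cite{ruv2} are distributed across many technical lemmas (the rigidity bound, the tomography error analysis, and the fault-tolerance threshold), and one must follow how the error parameters propagate through each layer without underestimating any single exponent. The delicate point is that we are asserting a \emph{lower} bound on $c$ --- it would be easy to produce a loose $O(n^c)$ upper bound by inflating the degree, but confirming that the exponent genuinely coming out of the chain is at least $8192$ requires that every polynomial blow-up be tracked faithfully and none be dropped.
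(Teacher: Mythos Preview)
Your plan is not wrong in principle, but it is far more laborious than what the paper actually does, and it misses the obvious shortcut. You propose to re-derive the exponent $8192$ from first principles by tracing the error-propagation chain (rigidity $\to$ tomography $\to$ fault-tolerant simulation) and multiplying the resulting polynomial degrees. That is essentially redoing the analysis of \cite{ruv2} from scratch.

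The paper's proof instead just reads off the explicit parameter constraints that RUV themselves state in their protocol description (reproduced here as Protocol~\ref{prot:ruv}): $n_s = n^{\alpha/2} \geq n^{64}$, $n_g = q n_s$, and $N \geq n_g^{\alpha-1}$. From $n^{\alpha/2}\geq n^{64}$ one gets $\alpha\geq 128$; then $N \geq (qn_s)^{\alpha-1}\geq q^{127}n^{64\cdot 127}=q^{127}n^{8128}$, and the total number of games is $N\cdot n_g \geq q^{128}n^{8192}$. That is the entire argument: two lines of arithmetic on parameters already fixed by RUV, with no need to revisit the rigidity bound, the tomography precision, or the fault-tolerance threshold.

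So your route would eventually arrive at the same number, but the paper's point is that the exponent is \emph{already encoded} in the RUV parameter choices, and one only has to multiply them out. Your approach buys nothing extra and introduces exactly the bookkeeping risk you yourself flag in the final paragraph.
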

\begin{proof}
To determine an upper bound for the round complexity we only need to inspect the number of rounds of CHSH games, since the protocol randomly alternates
between this and the other three subprotocols. As shown in Protocol~\ref{prot:ruv},
the verifier plays $N$ sets of CHSH games with the two provers. Each
set consists of $qn_s$ games and $n_s \geq n^{64}$.
Additionally, it is required that $N \geq (qn_s)^{\alpha - 1}$, where $n^{\alpha/2} \geq n^{64}$, so $\alpha \geq 128$.
These conditions are necessary for the correctness of the state tomography and process tomography subprotocols \cite{ruv}.
We then have that $N \geq d n^{8128}$, where $d$ is a constant of the form $d = q^{\alpha - 1}$.
This is the number of sets of CHSH games and hence the number of required games is lower bounded by $n^{8192}$. It follows that the number of rounds is $O(n^c)$, where $c \geq 8192$. Note that by ``lower bounded'' we refer to the case when all of the CHSH statistics are consistent and the verifier does not reject. In the case of inconsistent statistics, the verifier can reject before playing $n^{8192}$ games.
\end{proof}

The subprotocols of the RUV protocol are themselves verification protocols as proved in \cite{ruv}. The subprotocol that we will use is the state tomography protocol.
As part of the RUV protocol, it is used to prepare resource states which are \emph{XZ-determined}, i.e. states that are uniquely determined
by their traces against $X$ and $Z$ operators. To compose the RUV with the FK protocol we will use a modified version of the state tomography subprotocol, so that we can prepare all states that are allowed inputs for the FK protocol.
We will give the modified protocol in the next section.

\begin{algorithm}[h!]
\caption{Reichardt, Unger, Vazirani \MIPs Protocol from \cite{ruv} (for two provers)}
 \label{prot:ruv}
 \vskip 0.2 cm
\textbf{Assumptions} \\
The verifier delegates a quantum circuit of size $n$ to two quantum provers.
Let $n_s = n^{\alpha/2} \geq n^{64}$, $q = 11$, $n_g = qn_s$, $N \geq n_g^{\alpha - 1}$ and $\delta = 1 / (6n^{\alpha/8})$.
The two provers share $Nn_g$ Bell states. \\

\textbf{Protocol} \\
The verifier alternates randomly between four subprotocols. He chooses the first three with probability $(1 - \delta)/3)$ and the last one
with probability $\delta$.
\begin{enumerate}
\item \textbf{CHSH games.} The verifier referees $N$ sets of sequential CHSH games, each consisting of $n_g$ games between the provers.
He rejects if they win less than:
\[
\cos^2(\pi/8) Nn_g - \frac{1}{2\sqrt{2}}\sqrt{N n_g \log(N n_g)}
\]
of the games.
\item \textbf{State tomography.} The verifier chooses $K \in [N]$ uniformly at random and referees $K - 1$ sets of CHSH games. He sends the questions
from the $K$th set to prover $1$, while running a state tomography protocol with prover $2$.
In this protocol prover $2$ is asked to prepare $q$-qubit resource states by
measuring his halves of the shared Bell states. This will collapse prover $1$'s states to the same
$q$-qubit resource states up to corrections. The verifier checks
this using the CHSH measurement outcomes from prover $1$. These outcomes tomographically determine the states
that are being prepared. He rejects if the tomography statistics are inconsistent.
\item \textbf{Process tomography.} The verifier
chooses $K \in [N]$ uniformly at random and referees $K - 1$ sets of CHSH games. He sends the questions
from the $K$th set to prover $2$,
while running a process tomography protocol with prover $1$. In this protocol prover $1$ is asked to perform Bell measurements on
his halves of the shared Bell states. The verifier checks this using the CHSH measurement outcomes from prover $2$.
He rejects if the tomography statistics are inconsistent.
\item \textbf{Computation.} The verifier
chooses $K \in [N]$ uniformly at random and refereed $K - 1$ sets of CHSH games. In the $K$th game he runs a state
tomography protocol with prover $2$ and a process tomography protocol with prover $1$. The combination of these two achieves computation via
gate teleportation.
\end{enumerate}
\end{algorithm}

\section{Main Results}\label{sect:main}

\subsection{Robustness}\label{sect:main-robust}

The first result we prove is that the FK protocol is robust with respect to small variations in the input. Throughout this paper, by ``input'' we are referring to the quantum states that the verifier sends to the prover and not the computation input. Without loss of generality we can assume that the desired computation that will be delegated to the server has the fixed classical input $0,\ldots 0$. Dealing with arbitrary classical or quantum input is straightforward, as explained in \cite{fk}, and makes no difference for our result. Hence, for the rest of this paper we define the \emph{input state} of the FK protocol to be the tensor product of the individual qubits prepared by the verifier, comprising the dotted-complete graph before the prover applies controlled-$Z$ to entangle them (these include the computation, trap and dummy qubits).

The fact that FK is robust means that the protocol's input state can be deviated from its ideal value by some small amount and the protocol will continue to function.
In particular, this input state could be the output of some other protocol, provided that this state was close to its ideal value. As we will see in the next subsection, the RUV protocol is capable of such a preparation. We start by formally defining robustness in this context.

\begin{definition}[Robustness] \label{def:robust}
A verification protocol with quantum input is robust if, given that the protocol input is $\epsilon$-close in trace distance to the ideal input,
in the limit where $\epsilon \rightarrow 0$ the completeness and soundness bounds remain unchanged.
\end{definition}

Mathematically, if we denote the multi-qubit input state as $\rho$, and the pure states comprising the ideal input as $\pi_i$, where $i$ goes from $1$ to the number of qubits, we have that:
\begin{equation}
\| \rho - \bigotimes_{i} \pi_i \|_{Tr} \leq \epsilon
\end{equation}
Note that $\rho$ is of the same dimension as $\bigotimes_{i} \pi_i$ as it does not contain any ancilla qubits from the environment.
Given the definition of robustness, we prove that:

\begin{theorem} \label{t:robust} \label{t-robust}
The FK protocol is robust and given an input which is $\epsilon$-close to its ideal value, the completeness is lower bounded by  $1-2\epsilon$ and the soundness bound changes by at most $O(\sqrt{\epsilon})$.
\end{theorem}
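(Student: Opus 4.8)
The plan is to prove completeness and soundness separately, since they demand arguments of very different strength. For completeness there is no adversary, so the input to the honest run is just the reduced state $\rho$ on the preparation register, and the entire honest execution---the prover's entangling step, the directed measurements, and the verifier's final trap check---collapses into a single two-outcome measurement $\{M_{\mathrm{acc}}, I - M_{\mathrm{acc}}\}$ on that register. On the ideal input $\bigotimes_i \pi_i$ this accepts with probability exactly $1$, because the traps are deterministic (this is the completeness content of the earlier lemma). Since $0 \le M_{\mathrm{acc}} \le I$ and $\|\rho - \bigotimes_i \pi_i\|_{Tr} \le \epsilon$, the change in acceptance probability is controlled directly by the trace norm, $|\Tr(M_{\mathrm{acc}}(\rho - \bigotimes_i\pi_i))| \le 2\|\rho - \bigotimes_i\pi_i\|_{Tr} \le 2\epsilon$, which gives the completeness bound $1-2\epsilon$ immediately. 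This is the part of Lemma~\ref{l-weakrobust} that passes through unchanged.

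The soundness bound is the substantive part, and here I would not argue about $\rho$ in isolation. The difficulty is that in the intended composition the deviated input may be \emph{correlated} with a private register of the cheating server, so a deviation that looks harmless on the reduced state could be amplified by operations the server applies to its correlated memory. To handle this I would purify and work globally: model the general situation by a pure joint state $\ket\psi_{AB}$ with $\Tr_B \ket\psi\bra\psi = \rho$, where $A$ is the register sent to the server and $B$ is the server's private system, and take as comparison object the ideal product $\ket\phi_A \otimes \ket\chi_B$, with $\ket\phi = \bigotimes_i \pi_i$ (pure) and $\ket\chi$ an arbitrary server memory.

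The key quantitative step is to move from trace distance of the reduced states to Euclidean distance of the global purifications, and this is exactly where the square root enters. From $\|\rho - \ket\phi\bra\phi\|_{Tr} \le \epsilon$ and the Fuchs--van de Graaf inequality I get $F(\rho, \ket\phi\bra\phi) \ge 1 - \epsilon$, and Uhlmann's theorem then lets me choose $\ket\chi$ so that $\big\| \ket\psi_{AB} - \ket\phi_A\otimes\ket\chi_B \big\| \le \sqrt{2\epsilon}$. Now fix any cheating strategy: averaging over the verifier's secret randomness, the event ``verifier accepts and the output is incorrect'' is represented by a single effect $\Pi$ with $0 \le \Pi \le I$ pulled back to the initial joint register, so its probability on either input is $\bra{\cdot}\Pi\ket{\cdot}$. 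Because the two initial vectors are $\sqrt{2\epsilon}$-close, these probabilities differ by at most $2\sqrt{2\epsilon} = O(\sqrt\epsilon)$, while on the ideal product state the probability is at most the original soundness $(2/3)^{\lceil 2d/5\rceil}$---a bound that holds for \emph{every} server memory $\ket\chi$, which is precisely what makes the comparison legitimate. Taking the supremum over strategies shows the soundness changes by at most $O(\sqrt\epsilon)$, establishing Lemma~\ref{l-strongrobust} and hence the theorem.

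I expect the main obstacle to be exactly this correlation between the input deviation and the server's register: it is what invalidates the naive linear estimate on $\rho$ and forces the purification argument, trading the clean $\epsilon$ of the completeness side for $O(\sqrt\epsilon)$ on soundness. The technical care will lie in verifying that the full interaction---one-way quantum channel, adaptively chosen measurement angles, and trap check---can indeed be absorbed into a single fixed effect $\Pi$ on the initial joint state for a fixed strategy, and in confirming that the FK soundness bound genuinely holds uniformly over the auxiliary memory $\ket\chi$, so that the ideal reference term never exceeds $(2/3)^{\lceil 2d/5\rceil}$. The completeness half, by contrast, should follow immediately from the operational meaning of the trace norm.
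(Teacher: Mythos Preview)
Your proposal is correct, and for completeness it is essentially the paper's argument: the paper also reduces the honest run to a CPTP map followed by the projector onto the correct outcome, and uses contractivity of trace distance to get $1-2\epsilon$. (A small labelling slip: completeness is Lemma~\ref{l:completeness}, not Lemma~\ref{l-weakrobust}; the latter concerns soundness with \emph{uncorrelated} deviation.)

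For soundness you take a genuinely different route. The paper does not purify. Instead it invokes a structural result (from the cited reference on dynamics with initial correlations) stating that the evolution of the reduced input in the presence of correlations is a CPTP map plus an additive inhomogeneous term $\delta\rho_A = \Tr_B(U_{AB}\,\rho_{\mathrm{corr}}\,U_{AB}^\dagger)$ with $\rho_{\mathrm{corr}} = \rho_{AB} - \rho_A\otimes\rho_B$. The CPTP part is then absorbed into the prover's deviation exactly as in Lemma~\ref{l-weakrobust}, so the original FK bound applies to it; the inhomogeneous part is controlled by bounding $\|\rho_{\mathrm{corr}}\|_{Tr}$ via the gentle-measurement corollary, which is where the $O(\sqrt\epsilon)$ enters. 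Your argument replaces this machinery by Uhlmann's theorem: you find a product purification $\ket{\phi}\otimes\ket{\chi}$ that is $\sqrt{2\epsilon}$-close to $\ket{\psi}_{AB}$, and then use that the FK soundness bound already holds uniformly over the server's ancilla. This is more elementary---it avoids the non-CPTP decomposition entirely---and makes the role of the ancilla explicit. The paper's approach, on the other hand, isolates precisely the ``correlation'' piece $\rho_{\mathrm{corr}}$ that causes the non-CPTP behaviour, which is conceptually informative about where correlated attacks can gain leverage. Both yield the same $O(\sqrt\epsilon)$ shift; the technical point you flag---that the full adaptive interaction collapses to a fixed effect $\Pi$ on the initial joint state for a fixed strategy---is exactly the reduction the paper performs implicitly when it writes the outcome density operator $B_j(\nu)$ and traces against $P_{\mathrm{incorrect}}$.
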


\noindent Because we are tracing out the environment, which could be controlled by an adversary, the security of the protocol, with a deviated input state, needs to be re-established. We highlight this in the following proof sketch of Theorem~\ref{t-robust}:

\begin{proof}[Proof sketch]
We first examine soundness which considers the case of a dishonest prover. Intuitively, when the prover is malevolent, he will try
to convince the verifier to accept an incorrect outcome and thus deviate from the correct protocol.
However, as shown in \cite{fk}, no matter how much the prover deviates, the probability for the verifier to accept a wrong outcome
is bounded.
If the input to the protocol
is already deviated from the ideal, one could expect that the soundness bound remains unchanged. The effect of a deviated input could be incorporated in the deviated actions of the prover.
This is indeed the case when the input is uncorrelated with any external system and we can express the deviation as
a CPTP map (see Lemma \ref{l-weakrobust} for detailed proof).

In the general case, however, the deviated input could be correlated with subsystems controlled by adversaries. This deviation could be used by the prover to improve his cheating probability. Mathematically this is manifested by the fact that the prover's action in the presence of initial correlations is not in general a trace preserving map. It can be expressed as a linear combination of a CPTP deviation and an inhomogeneous term  which could be either positive or negative as shown in the \cite{correlations}. In this case, we use the $\epsilon$-closeness of the input state to derive a bound of order $O(\sqrt{\epsilon})$ for the norm of the inhomogeneous term. From linearity, and using the previous argument it follows that in the general case the soundness bound changes by at most
$O(\sqrt{\epsilon})$ (see Lemma \ref{l-strongrobust} for detailed proof).

In the case of completeness, we are assuming the prover is honest.
If we start with an $\epsilon$-close input state, because of the linearity of the operators involved, we will end up with an output
state that is $O(\epsilon)$-close to the ideal output (see Lemma \ref{l:completeness} for detailed proof).
\end{proof}

A similar approach to Lemma \ref{l-weakrobust} was used in \cite{vfk} for defining approximate blindness, and in \cite{DFPR13} to prove universal composability  for blind quantum computing protocols.
However, to our knowledge, these results are not strong enough to cover the requirements for the composition with the RUV protocol. In \cite{vfk} only the blindness property was examined while verifying the computation was not considered. In \cite{DFPR13} they considered local-verifiability which does not take into account for example, the possibility of correlated attacks such as those that are possible when the two provers have a prearranged correlated strategy.

\subsection{Composition}\label{sect:main-comp}
One of our main objectives is to construct a device independent version of the FK protocol. The first step, was to show that FK is robust. This property guarantees that if we have an input state that is only approximately the ideal one, the protocol continues to work. We can now break the task of achieving device independent FK into two parts, which we need to compose sequentially. 
\begin{enumerate}
\item \textbf{State Preparation} - use a device independent protocol to prepare on the prover's side a state which is $\epsilon$-close to the FK input.
\item \textbf{Verified Delegated Computation} - run the FK protocol with the prover that has the $\epsilon$-close input state (since robustness allows this).
\end{enumerate}
The advantage of this technique is that we are free to use any protocol for state preparation as long as we have the guarantee of $\epsilon$-closeness. 
This is due to our strong robustness result, which shows that FK will work even if the deviation in the prepared state is correlated with the prover's cheating strategy in the delegated computation stage.
In this paper, we achieve state preparation using the device-independent state tomography sub-protocol of RUV. 
This sub-protocol has the $\epsilon$-closeness property that we require, as explained in \cite{ruv}. The resulting composite protocol will have a better round complexity than the full RUV protocol for the verification of quantum computations. The complexity can be improved further if a more efficient state preparation protocol is used. Recently, in an independent work that simultaneously appeared with our arxiv version, a more efficient scheme for state preparation is proposed that is based on a self-testing approach \cite{joe} rather than the rigidity of CHSH games \cite{ruv}.

We first clarify some details of the RUV protocol, which are essential in understanding how our composite protocol will work. RUV uses the rigidity property of CHSH games to determine that the provers share multiple copies of the Bell state $\ket{\Phi^+} = (\ket{00} + \ket{11}) / \sqrt{2}$, which is $X\!Z$-determined. They can then use $X\!Z$ state tomography to verify the preparation of any other $X\!Z$-determined state. In particular, they use it to tomographically verify the preparation of a set of states which can be used to perform universal computation. They also describe how it is possible to extend the protocol in order to have full tomography with the $Y$ operator as well \cite{ruv}. However, because they are using the $\ket{\Phi^+}$ Bell state, it is only possible to fix the $Y$ operator up to a sign change. That is, the provers can always choose to measure in either the $Y$ or $-Y$ bases without being detected (this corresponds to complex conjugating the states with respect to their representations in the computational basis).
In fact this problem has been noticed by others as well \cite{mckague, MM11}.
As explained in \cite{ruv}, it is possible to force the provers to consistently choose either $Y$ or $-Y$ for their measurements. This makes the resulting state prepared by state tomography close to either the ideal state or the complex conjugate of the ideal state.

At first glance it would seem that this could be problematic for the FK protocol.
We would have to show that running the FK protocol with an input state that is close to the complex conjugated version of the ideal input would be detected by the verifier. Intuitively this is the case, since trap qubits are in the $X\!Y$-plane and complex conjugating them would lead to different measurement outcomes. We will not prove this and instead provide a simpler solution.

The problem stems from the fact that we are using the $X\!Z$-determined $\ket{\Phi^+}$ state. Let us instead consider the state $\ket{\Psi^+} = (\ket{01} + \ket{10})/\sqrt{2}$. Using Theorem~\ref{thm:stabilizer} from \cite{ruv},
and the fact that $\ket{\Psi^+}$ has stabilizer generator set $\{X\otimes X,Y\otimes Y\}$ which belongs to
$\{ I, X, Y \}^{\otimes 2}$ we have that this state is $XY$-determined.

\begin{theorem} \label{thm:stabilizer}
\cite{ruv}
A stabilizer state is determined by any of its sets of stabilizer generators.
\end{theorem}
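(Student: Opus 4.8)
The plan is to reduce the statement to the standard fact that $n$ independent, mutually commuting Pauli operators whose group avoids $-I$ have a one-dimensional common $+1$-eigenspace. Let $\ket{\psi}$ be a stabilizer state on $n$ qubits and let $\{g_1,\dots,g_n\}$ be any of its generating sets: each $g_j$ is a signed Pauli operator, the $g_j$ commute and are independent, and $S=\langle g_1,\dots,g_n\rangle$ does not contain $-I$. I would in fact establish the slightly stronger claim that $\ket{\psi}\bra{\psi}$ is the \emph{unique} density operator $\sigma$ with $\Tr(\sigma g_j)=1$ for all $j$, since this is precisely what the application needs: these traces are read off from the expectation values of the Pauli strings carried by the generators.

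First I would note that each $g_j$ is Hermitian with $g_j^2=I$, hence has spectrum $\{\pm1\}$; writing $g_j=P_j^+-P_j^-$ for its spectral projectors, the constraint $\Tr(\sigma g_j)=1$ together with $\Tr(\sigma P_j^+)+\Tr(\sigma P_j^-)=1$ and $\Tr(\sigma P_j^-)\ge 0$ forces $\Tr(\sigma P_j^-)=0$, i.e. $\sigma$ is supported on the $+1$-eigenspace of $g_j$. Hence any admissible $\sigma$ is supported on $V=\bigcap_j \mathrm{range}(P_j^+)$.

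The crux is to show $\dim V=1$. I would use the projector onto $V$, namely $\Pi=\prod_{j=1}^n \tfrac{I+g_j}{2}$, a product of mutually commuting projectors. Expanding the product and using independence of the generators gives $\Pi=\tfrac{1}{2^n}\sum_{g\in S} g$, the sum ranging over the $2^n$ distinct elements of $S$. Because $-I\notin S$, every non-identity element of $S$ has the form $\pm P$ for a non-identity Pauli string $P$ and is therefore traceless, while $\Tr(I)=2^n$. Thus $\Tr(\Pi)=\tfrac{1}{2^n}\Tr(I)=1$, so $\Pi$ is rank one and $V$ is spanned by $\ket{\psi}$. Combining with the previous step yields $\sigma=\ket{\psi}\bra{\psi}$, so the state is determined by the generating set.

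I expect the rank-one computation for $\Pi$ to be the main obstacle, or rather the place where the hypotheses must be invoked carefully: independence of the generators is what makes the subset products fill out all $2^n$ elements of $S$, and $-I\notin S$ is what guarantees that no element and its negation both appear, so that cancellation cannot spoil the trace count. Finally, to recover the $XY$-determined conclusion for a state such as $\ket{\Psi^+}$, I would observe that when the chosen generators all lie in $\{I,X,Y\}^{\otimes n}$ the quantities $\Tr(\sigma g_j)$ are obtained from $XY$-type measurements alone; the uniqueness just proved then shows the state is pinned down by those measurements even though its full stabilizer group also contains $Z$-type elements (e.g. $-Z\otimes Z$ for $\ket{\Psi^+}$).
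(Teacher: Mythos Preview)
Your argument is correct and is the standard proof: constrain the support of any compatible density operator to the joint $+1$-eigenspace of the generators, then show that eigenspace is one-dimensional via $\Tr(\Pi)=1$ using independence and $-I\notin S$. Note, however, that the paper does not supply its own proof of this theorem; it is quoted from \cite{ruv} and used as a black box to conclude that $\ket{\Psi^+}$ is $XY$-determined, so there is no in-paper proof to compare against. Your final paragraph correctly anticipates exactly how the paper applies the result.
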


\noindent In principle it is possible to run a form of the RUV protocol in which we choose the CHSH games such that we rigidly determine that the provers share multiple copies of the Bell state $\ket{\Psi^+}$ instead of $\ket{\Phi^+}$. Analogous to the previous case, the extended form of the protocol would then fix the $Z$ operator up to a sign change (instead of the $Y$ operator). This means that the provers can always perform a reflection about the $X\!Y$ plane with no noticeable changes. However, the $X\!Y$ plane states are invariant under such a reflection. We can therefore use this to prepare the input which will be used by the FK protocol. The only problem we encounter is that we also require the preparation of $\ket{0}$ and $\ket{1}$ states which act as dummy qubits in the FK protocol \cite{fk}.
As described in Protocol~\ref{prot:fk} these dummy qubits are measured in order to ``break'' the dotted-complete graph into the computation graph and the two trap graphs.
The problem is that the $X\!Y$ plane reflection has the effect of flipping the computational basis states (state $\ket{0}$ becomes $\ket{1}$ and state $\ket{1}$ becomes $\ket{0}$). However this deviation (flip) has to be applied globally otherwise it affects the statistics of the CHSH game and thus the verifier rejects \cite{ruv}.
Such a global flip is detected by the FK protocol.
A formal proof is given in Lemma \ref{l:reflectedFK}, Section~\ref{sect:composite}, while below we give a sketch of the proof.

In the honest scenario for the FK protocol, the measurement of a dummy qubit in state $\ket{1}$ introduces an additional $Z$ correction to its neighbouring qubits (this is because we are using the controlled-$Z$ operation for entangling qubits). Hence in a malicious setting the effect that a flip has on a trap qubit with an odd number of neighbouring dummy qubits, leads to an extra $Z$ operation. Such a $Z$ flip changes a $\ket{+_\theta}$ state to $\ket{-_\theta}$. Thus, the measurement of this trap qubit will deterministically fail and the verifier will detect this. On the other hand, since the verifier chooses the input, he can always pad the computation such that the overall graph has trap qubits with an odd number of neighbour dummy qubits. This is due to the fact that in a dotted-complete graphs (Definition \ref{d-dotted}), some of the traps will have $N - 1$ neighbouring dummy qubits. Therefore, if the size of the input computation $N$ is odd, the verifier need only pad the computation size to become $N + 1$.

Now we are in a position to construct the composite protocol which composes RUV with FK.
We give a modified version of the state tomography protocol of RUV (see Protocol~\ref{prot:modifiedstatetomography}). Proof that Protocol~\ref{prot:modifiedstatetomography} is valid verification protocol is given in Section~\ref{sect:composite}. The purpose of this modification is to verifiably prepare the minimal resource states which are subsequently used as inputs for the FK protocol.

\begin{algorithm}[h!]
\caption{Modified State Tomography Protocol}
 \label{prot:modifiedstatetomography}
 \vskip 0.2 cm
\textbf{Assumptions} \\
Let $S = \{ (1,0,0), (0,1,0), (0,0,1), \frac{1}{\sqrt 2} (1, 1, 0), \frac{1}{\sqrt 2} (1, -1, 0), \frac{1}{\sqrt 2} (1, 0, 1), \frac{1}{\sqrt 2} (1, 0, -1), \frac{1}{\sqrt 2} (0, 1, 1),$ $\frac{1}{\sqrt 2} (0, 1, -1) \}$.
Let $M_v$ be a $2$ outcome projective measurement defined by the projectors: $\frac{1}{2} (I + \vec v \cdot (X, Y, Z))$
and $\frac{1}{2} (I - \vec v \cdot (X, Y, Z))$. \\
Let the tuple $(\vec{a}, \vec{b}) \in S \times S$ denote the measurements $M_a$ for prover $1$ and $M_b$ for prover $2$ that they need to perform on their halves on an entangled state when instructed by the verifier.
Sets of such tuples define CHSH games. For example the set $\{ (1,0,0), (0,0,1) \} \times \{ \frac{1}{\sqrt 2} (1, 0, 1), \frac{1}{\sqrt 2} (1, 0, -1) \}$ defines the $X\!Z$ CHSH game.
Given $S$, there are six such sets of CHSH games (two $X\!Z$, two $X\!Y$ and two $Y\!Z$) \cite{ruv}.
For a suitable numbering of these games, let $CHSH_i$ be the $i$th CHSH game, $i \in \{ 1, ... 6 \}$.
\\

\textbf{Protocol} \\
The verifier alternates uniformly at random between the following subprotocols:
\begin{enumerate}
\item \textbf{CHSH games.} Verifier referees $6N$ sets of sequential CHSH games, such that each group of $N$ sets is one of the six possible
CHSH types of games. Each set consists of $n_g$ games between prover $1$ and prover $2$.
For each group of $N$ CHSH games the verifier rejects if the two provers win less than:
\[
\cos^2(\pi/8) Nn_g - \frac{1}{2\sqrt{2}}\sqrt{N n_g \log(N n_g)}
\]
of the games.
\item \textbf{State tomography.} Verifier
chooses $K \in [N]$ uniformly at random and also randomly chooses $CHSH_i$ as one of the six possible CHSH games.
Then he referees $K - 1$ sets of $CHSH_i$ games, sending
the questions from the $K$th set to prover $1$, while running a state tomography protocol with prover $2$.
In this protocol prover $2$ is asked to prepare 
resource states by measuring his halves of the shared Bell states.
This will collapse prover $1$'s states to the same 
resource states up to corrections. In the context of composition, these resource states will constitute the FK input. The verifier uses the measurement outcomes of prover $1$ to
tomographically check this preparation. He rejects if the tomography statistics are inconsistent. In the end, if the verifier accepts, he concludes that with high probability prover $1$ has a state which is close in trace distance to the tensor product of resource states. The formal statement of this fact is given in Theorem~\ref{t:statetomography}, taken from \cite{ruv},
and more precisely in Equation~\ref{eqn:closeness} from Lemma~\ref{l:closeness}.
\end{enumerate}
\end{algorithm}

The composite protocol, given as Protocol~\ref{prot:composite}, is the sequential composition of the modified state tomography of Protocol \ref{prot:modifiedstatetomography} with both provers followed by the FK protocol with prover $1$. Note that since prover $1$ is involved in both state tomography as well as the FK protocol, the strong version of the robustness property is required. This is to address the effect of any potential correlated attacks where provers $1$ and $2$ have agreed in advance on a strategy.
The deviations of prover $2$, in the preparation stage, could be correlated with the deviations of prover $1$ during the computation stage (FK). This is the first rigorous proof of a protocol that involves lifting the FK protocol to the entangled provers setting. We give here the correctness and soundness of this protocol and show that it is more efficient than the RUV protocol (Theorem~\ref{t:composite}) while in Section~\ref{sect:composite} we give the proof of this theorem.

\begin{algorithm} [t!]
\caption{Composite Verification Protocol}
\label{prot:composite}
\vskip 0.2 cm
\begin{enumerate}
\item Run the modified state tomography protocol (Protocol ~\ref{prot:modifiedstatetomography}).
\item From the states prepared by this protocol on prover $1$'s side,
select the input for FK and run the FK protocol with prover $1$. (Protocol ~\ref{prot:fk})
\end{enumerate}
\end{algorithm}

\begin{theorem} \label{t:composite}
Assuming the verifier wants to delegate the computation of a quantum circuit of size $n$, Protocol~\ref{prot:composite} is a \MIPs verification protocol having completeness lower bounded by $1 - O(n^{-1/128})$,
soundness upper bounded by
$\left( \frac{2}{3} \right)^{\lceil \frac{2d}{5} \rceil} +
O(n^{-1/12})$, where $d$ is the security parameter of the FK protocol, and round complexity $O(n^c)$,
where there exists some constant $c$ such that $c > 2048$.
\end{theorem}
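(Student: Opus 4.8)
The plan is to treat the three claimed quantities---round complexity, completeness, and soundness---separately, using the robustness of FK (Theorem~\ref{t-robust}) as the interface between the two composed stages, and the closeness guarantee of the modified state tomography (Theorem~\ref{t:statetomography}, Equation~\ref{eqn:closeness}) as the source of the deviation parameter $\epsilon$. For round complexity I would first note that the FK stage on prover~$1$ contributes only $O(N^2)$ rounds, polynomial in $n$, so the cost is dominated by the modified state tomography. Repeating the counting of Lemma~\ref{l:ruv-complexity}, but observing that Protocol~\ref{prot:composite} invokes only the CHSH and state tomography subprotocols---never process tomography or computation-by-teleportation---the number of games is $N \cdot n_g \approx n_g^{\alpha} \approx n^{\alpha^2/2}$, where now $\alpha$ need only be large enough for tomographic certification to the required precision rather than for full verified computation. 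Since this permits $\alpha$ just above $64$ (against $\alpha \geq 128$ in RUV), we obtain $O(n^c)$ with $c = \alpha^2/2 > 2048$.

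For completeness I would assume both provers honest. The modified state tomography then accepts except with probability $O(n^{-1/128})$, and conditioned on acceptance it prepares on prover~$1$'s side a state that is $\epsilon$-close in trace distance to the ideal tensor-product FK input, with $\sqrt{\epsilon} = O(n^{-1/12})$, i.e. $\epsilon = O(n^{-1/6})$ (Equation~\ref{eqn:closeness}). Passing this state to FK and invoking the completeness half of robustness (Lemma~\ref{l:completeness}) gives FK acceptance at least $1 - 2\epsilon$. Multiplying the two stage-wise acceptance probabilities, the overall completeness is at least $(1 - O(n^{-1/128}))(1 - O(n^{-1/6})) \geq 1 - O(n^{-1/128})$, the slower-decaying tomography term dominating.

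For soundness I would fix an arbitrary, possibly colluding, strategy of the two provers. The state tomography guarantee ensures that whenever the tomography stage is accepted, prover~$1$'s reduced state is $\epsilon$-close to the ideal FK input; crucially, this state may be correlated with prover~$2$'s register and with prover~$1$'s own deviation in the subsequent FK stage. This is exactly the correlated-attack regime, so I would invoke the strong robustness result (Lemma~\ref{l-strongrobust}) rather than its weak counterpart, whence the FK soundness bound changes by at most $O(\sqrt{\epsilon}) = O(n^{-1/12})$. Before concluding I must dispose of the residual sign ambiguity of CHSH rigidity: using the $XY$-determined Bell state $\ket{\Psi^+}$ (Theorem~\ref{thm:stabilizer}), the only freedom the provers retain is a global reflection about the $XY$ plane, which leaves all trap and computation qubits invariant but flips the dummy qubits $\ket{0} \leftrightarrow \ket{1}$; by Lemma~\ref{l:reflectedFK} this global flip is detected by a trap with an odd number of dummy neighbours and so cannot help the adversary. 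Adding the small probability that inconsistent tomography statistics are nonetheless accepted, all $n$-dependent error terms combine to $O(n^{-1/12})$, and the soundness is at most $(2/3)^{\lceil 2d/5 \rceil} + O(n^{-1/12})$, certifying Protocol~\ref{prot:composite} as a \MIPs protocol with a classical verifier.

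The main obstacle I anticipate is the soundness step. The delicate point is that the $\epsilon$-closeness delivered by state tomography is a statement about prover~$1$'s \emph{reduced} state, whose purification is held by the adversarial prover~$2$; I must therefore verify that the hypotheses of Lemma~\ref{l-strongrobust} are met exactly and that the inhomogeneous, non-trace-preserving term it produces is controlled at order $O(\sqrt{\epsilon})$. Interfacing the probabilistic rejection of inconsistent tomography statistics with this closeness-conditioned robustness bound, without double-counting the failure events, is the bookkeeping that needs the most care.
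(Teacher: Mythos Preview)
Your high-level architecture---modified state tomography to obtain $\epsilon$-closeness, then strong robustness (Lemma~\ref{l-strongrobust}) to transfer FK's bounds, with Lemma~\ref{l:reflectedFK} disposing of the residual $XY$-reflection---is exactly the paper's. But several of your intermediate exponents are wrong, and they matter for which terms dominate. Equation~\ref{eqn:closeness} gives $\epsilon = O(n^{-1/64})$, not $O(n^{-1/6})$, so $\sqrt{\epsilon} = O(n^{-1/128})$, not $O(n^{-1/12})$. In the soundness bound the $O(n^{-1/12})$ term is \emph{not} the robustness correction $O(\sqrt{\epsilon})$; it is the soundness of the modified state tomography itself (Lemma~\ref{l:modifiedstatetomography}), entering via a union bound with the FK soundness. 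Likewise for completeness: tomography accepts with probability $1 - O(n^{-1/2})$ (not $1 - O(n^{-1/128})$), and it is the FK term $1 - O(\sqrt{\epsilon}) = 1 - O(n^{-1/128})$ that dominates, the reverse of what you wrote. You land on the right final numbers, but by misattributing the contributions.

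Your round-complexity argument also diverges from the paper's and is not adequately justified. The paper does not claim that dropping process tomography and computation lowers the admissible $\alpha$ to just above $64$; those constraints on $\alpha$ come from CHSH rigidity and tomographic precision, not from the subprotocols you omit. Instead the paper observes that the certified subset in Lemma~\ref{l:closeness} has size only $O(n^{1/64})$, where $n$ is the tomography parameter. Since the FK input requires $O(|\mathcal{C}|^2)$ qubits, one must take the tomography parameter to be $|\mathcal{C}|^{128}$; with state-tomography round count $O(n^{\alpha})$ for $\alpha > 16$, this yields $O(|\mathcal{C}|^{c})$ with $c > 128 \cdot 16 = 2048$, and only then is $n$ relabelled to the circuit size. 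Your $c = \alpha^2/2$ calculation reaches the same threshold by a different route, but the premise that $\alpha$ may be halved needs an argument you have not supplied.
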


While the obtained round complexity is an improvement over RUV (Lemma~\ref{l:ruv-complexity}) it is still far from practical. However, we believe our approach serves as a proof of principle, that this type of composition can be beneficial. It also highlights where improvements could be made. It is the state tomography subprotocol that increases the round complexity, while the FK protocol has a relatively low complexity\footnote{Note that the round complexity of FK could be further reduced to linear, if one is willing to admit a higher upper bound for soundness.}. The detailed proofs are given in Section~\ref{sect:composite}.

\subsection{Fault Tolerance} \label{sect:main-ft}

In constructing our composite verification protocol, we used the robustness of the FK protocol. Our last result is to characterise the difference between robustness and fault tolerance and to show that the FK protocol can be made fault tolerant using a topological error correcting code. Consequently, our composite protocol can also be made fault tolerant provided that the state tomography part is run on top of an error correcting code.

As mentioned before robustness is a protocol's ability to continue to function given a deviated input. Fault tolerance is when a protocol functions correctly in the presence of error prone devices. The essential assumption for robustness is that the actual (multi-qubit) input is $\epsilon$-close to its ideal value. Fault tolerant protocols, on the other hand, assume that errors can occur at
each individual qubit. The faulty devices are usually represented by the action of a partially depolarizing channel: $\mathcal{E} = (1 - p)[I] + \frac{p}{3} ([X] + [Y] + [Z])$.
Here $p$ is the probability of error, and the square brackets indicate the action of an operator.
This leads to the following observation:

\begin{lemma} \label{l:ft1}
Let $\sigma = \otimes_{i=1}^n \rho_i$ be a system of $n$ qubits. Assume each qubit goes through a partially depolarizing channel $\mathcal{E}$ having probability of error $p > 0$.
Let the state of the system, after all qubits have passed through the channel, be $\sigma' = \otimes_{i=1}^n \mathcal{E}(\rho_i)$. We have that $\|\sigma - \sigma'\|_{Tr} \leq min(1, np)$ and there exist states $\sigma$ for which $\|\sigma - \sigma'\|_{Tr} = 1$.
\end{lemma}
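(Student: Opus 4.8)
The plan is to establish the two claims separately: the upper bound $\|\sigma - \sigma'\|_{Tr} \le \min(1,np)$ and the existence of a saturating family. Throughout I use that the trace distance between any two density operators is at most $1$, and that it is invariant under tensoring both arguments with a common density operator, i.e. $\|\rho_1 \otimes \alpha - \rho_2 \otimes \alpha\|_{Tr} = \|\rho_1 - \rho_2\|_{Tr}$ for a density operator $\alpha$ (this follows from multiplicativity of the trace norm). The trivial bound $\|\sigma - \sigma'\|_{Tr} \le 1$ already supplies the first entry of the minimum, so the work lies in the $np$ bound and in the tightness.

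For the $np$ bound, the key is to view $\mathcal{E}$ as a probabilistic application of a Pauli error rather than to telescope qubit by qubit. I would write $\mathcal{E}(\rho) = (1-p)\rho + p\,\mathcal{N}(\rho)$, where $\mathcal{N}(\rho) = \frac13(X\rho X + Y\rho Y + Z\rho Z)$ is itself a channel. Expanding the product
\[
\sigma' = \bigotimes_{i=1}^n \big[(1-p)\rho_i + p\,\mathcal{N}(\rho_i)\big]
= \sum_{S \subseteq \{1,\dots,n\}} (1-p)^{\,n-|S|} p^{|S|} \Big( \bigotimes_{i \in S} \mathcal{N}(\rho_i) \Big) \otimes \Big( \bigotimes_{i \notin S} \rho_i \Big)
\]
gives a convex combination of density operators indexed by the set $S$ of qubits that suffer an error. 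The term $S = \emptyset$ is exactly $(1-p)^n \sigma$. Collecting the remaining terms into a single normalised density operator $\tau$, I obtain $\sigma' = (1-p)^n \sigma + \big(1-(1-p)^n\big)\tau$, hence $\sigma - \sigma' = \big(1-(1-p)^n\big)(\sigma - \tau)$ and therefore $\|\sigma - \sigma'\|_{Tr} = \big(1-(1-p)^n\big)\,\|\sigma - \tau\|_{Tr} \le 1-(1-p)^n$. Bernoulli's inequality $(1-p)^n \ge 1-np$ then yields $1-(1-p)^n \le np$, and combining with the trivial bound gives $\min(1,np)$.

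For tightness I would exhibit the explicit family $\sigma = \ket{0}\bra{0}^{\otimes n}$. A direct computation gives $\mathcal{E}(\ket{0}\bra{0}) = (1-\tfrac{2p}{3})\ket{0}\bra{0} + \tfrac{2p}{3}\ket{1}\bra{1}$, so $\sigma'$ is diagonal in the computational basis and assigns weight $(1-\tfrac{2p}{3})^n$ to the all-zeros string. Since $\sigma$ and $\sigma'$ are simultaneously diagonal, their trace distance equals the total variation distance of the corresponding classical distributions, which is $1 - (1-\tfrac{2p}{3})^n$. As $n$ grows (equivalently as $np \to \infty$) this tends to $1$, so in the limit the two states become orthogonal and $\|\sigma - \sigma'\|_{Tr} = 1$; this shows that the truncation at $1$ in $\min(1,np)$ is unavoidable and that the bound cannot be replaced by any constant strictly below $1$.

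The main obstacle is obtaining the clean constant $np$ together with the cap at $1$ in a single stroke. A naive telescoping over qubits, bounding each step by the single-qubit distance $\|\rho_i - \mathcal{E}(\rho_i)\|_{Tr}$, does go through but produces a looser constant and must be combined separately with the trivial bound to recover the minimum; the mixture decomposition above is cleaner because the coefficient $1-(1-p)^n$ is precisely the probability that at least one qubit is hit by an error, which is at most both $1$ and $np$. The only delicate point is the reading of ``$\|\sigma - \sigma'\|_{Tr} = 1$'': for finite $n$ and $p<1$ the distance is strictly below $1$, so this equality is to be understood in the limit $n \to \infty$, which the computational-basis family above makes explicit.
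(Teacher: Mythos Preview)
Your proof is correct and takes a genuinely different route from the paper for the upper bound. The paper bounds the single-qubit distance $\|\rho_i - \mathcal{E}(\rho_i)\|_{Tr} \le p$ and then telescopes over the tensor factors to get $\|\sigma - \sigma'\|_{Tr} \le \sum_i \|\rho_i - \mathcal{E}(\rho_i)\|_{Tr} \le np$, combining this separately with the trivial cap at $1$. You instead expand $\sigma' = \bigotimes_i\big[(1-p)\rho_i + p\,\mathcal{N}(\rho_i)\big]$ as a mixture over error patterns and isolate the no-error term, yielding the sharper intermediate bound $\|\sigma - \sigma'\|_{Tr} \le 1-(1-p)^n$, from which both $np$ and $1$ follow at once. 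Your approach is a bit more informative (it identifies the bound with the probability that at least one qubit is hit) and avoids having to invoke the trivial cap separately; the paper's telescoping is shorter but looser. For tightness both arguments use $\sigma = \ket{0}\bra{0}^{\otimes n}$; your computation of the exact distance $1-(1-\tfrac{2p}{3})^n$ via total variation is the clean way to do it, and your reading of the equality as a limiting statement is the right one.
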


This means that the deviation of an $n$-qubit system from the ideal input is not bounded by some constant amount.
This is intuitively clear, since by adding more qubits, we introduce more errors and the state of the composite system is further from its intended value.
In contrast to this, when considering robustness, the distance between the actual and ideal state is bounded by an arbitrarily small quantity. We will now address how can we do verification
in an error prone setting.

\begin{lemma} \label{l:ft2}
Assume we run the FK protocol with $N_T$ traps and each qubit is subject to the action of a partially depolarizing channel $\mathcal{E}$ having
probability of error $p > 0$.
Given the simplifying assumption that if a qubit is changed (through the action of an $X$, $Y$ or $Z$ operator) it will produce an incorrect measurement
outcome, the completeness of this protocol is upper bounded by $(1 - p)^{N_T}$.
\end{lemma}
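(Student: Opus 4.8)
The plan is to bound directly the probability that the verifier accepts in the honest run, now that every prepared qubit has passed through the depolarizing channel $\mathcal{E} = (1-p)[I] + \frac{p}{3}([X]+[Y]+[Z])$. Recall from the completeness analysis of Protocol~\ref{prot:fk} that, with an honest prover, the verifier accepts if and only if each of the $N_T$ trap qubits returns its expected deterministic outcome $r_t$. Since the prover follows the instructions exactly, the sole source of deviation is the action of $\mathcal{E}$ on the individual qubits, so the whole argument reduces to tracking how errors on trap qubits affect these deterministic outcomes.

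First I would isolate the contribution of a single trap. For a fixed trap qubit, $\mathcal{E}$ leaves it untouched with probability $1-p$ and applies one of $X$, $Y$, $Z$ with total probability $p$. By the simplifying assumption of the statement, any of these three Pauli operators forces an incorrect measurement outcome; hence, conditioned on an error striking the trap qubit itself, that trap fails with certainty. Consequently the event ``trap $t$ returns its correct outcome'' is contained in the event $B_t$ that no Pauli error strikes the $t$-th trap qubit, and $P(B_t) = 1-p$.

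Next I would combine the traps. Since the acceptance event is the intersection over all $t$ of ``trap $t$ correct'', the containment just established gives that acceptance is contained in $\bigcap_{t=1}^{N_T} B_t$. Because $\mathcal{E}$ acts as an independent tensor factor on each qubit and the $N_T$ traps are distinct qubits, the events $B_t$ are mutually independent, so the acceptance probability is bounded by $P\!\left(\bigcap_{t=1}^{N_T} B_t\right) = \prod_{t=1}^{N_T} P(B_t) = (1-p)^{N_T}$, which is exactly the claimed upper bound on completeness.

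The one subtlety I would flag, and the reason the statement is phrased as an inequality rather than an equality, is error propagation from the dummy qubits neighbouring the traps: since a dummy in state $\ket{1}$ contributes an extra $Z$ correction to its neighbours under the entangling controlled-$Z$ operations, a bit flip on such a dummy (produced by $\mathcal{E}$ with probability $p$) alters the correction applied to an adjacent trap and can make it fail even when the trap qubit itself is untouched. These additional, potentially correlated failure modes only remove probability mass from the acceptance event, so they never threaten the bound; the main obstacle is therefore not any hard calculation but stating the containment-then-independence argument cleanly enough that these extra channels are manifestly seen to push the acceptance probability down rather than up.
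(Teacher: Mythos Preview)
Your proposal is correct and follows essentially the same approach as the paper: bound the probability that a single trap survives by $1-p$ via the simplifying assumption, invoke independence of the per-qubit depolarizing errors across the $N_T$ traps, and multiply to obtain $(1-p)^{N_T}$. Your additional remark about dummy-qubit errors only pushing acceptance down is a nice clarification of why the bound is an inequality, but it is not needed for (and does not appear in) the paper's argument.
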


\begin{algorithm}[t!]
\caption{Fault Tolerant FK Protocol}
\label{prot:ft}
\vskip 0.2 cm
\textbf{Assumptions} \\
The verifier wants to compute the execution of a measurement graph $G$ having $n$ qubits.
Both the verifier and prover's devices are subject to noise modelled as a partially depolarizing channel
acting on the preparation of the qubits and the application of the quantum gates.
For single qubits the channel is described by:
\begin{equation}
\mathcal{E}_1 = (1 - p)[I] + \frac{p}{3} ([X] + [Y] + [Z])
\end{equation}
And for two qubit states by:
\begin{equation}
\mathcal{E}_2 = (1 - p)[I \otimes I] + \frac{p}{15} ([I \otimes X] + \cdots + [Z \otimes Z])
\end{equation}
Additionally assume $p \leq p_{correct}$, where $p_{correct}$ is a threshold such that depolarizing noise bellow this threshold is corrected by the
topologically protected code from \cite{rhg}. \\
Let $\mathcal{G}^{\nu}$ denote a \emph{brickwork state} encoding the graph $G$ and containing one trap qubit, as explained in \cite{fk}.
Let $\mathcal{L}^{\nu}$ denote a fault tolerant encoding of the graph $\mathcal{G}^{\nu}$ using the topologically protected code from
\cite{rhg}. The encoding is done as explained in \cite{topo}, hence $\mathcal{L}^{\nu}$ will be decorated lattices (see Figures~1,~2 in \cite{topo}).
The index $\nu$ denotes the randomness in the $\theta$ angles for the encoding as chosen by the verifier.
Let $\mathcal{S^{\widetilde{\nu}}} = \mathcal{L}^{\nu_1} \otimes \mathcal{L}^{\nu_2} \otimes \cdots \otimes \mathcal{L}^{\nu_N}$, where
$R/\log \left( \frac{cn}{cn - 1} \right) < N < R/\log \left( \frac{cn}{cn - 1} \right) + O(1)$,
for some constants $R > 1$, $c > 2$ and $\widetilde{\nu} = \{ \nu_1 \cdots \nu_N \}$.
We will refer to $\mathcal{S^{\widetilde{\nu}}}$ as a \emph{sequence} of encodings. \\

\textbf{Protocol}
\begin{enumerate}
\item The verifier chooses $R > 1$ and constructs the random set $\widetilde{\nu}$.
\item The verifier
prepares the qubits for the sequence $\mathcal{S}^{\widetilde{\nu}}$ and sends them to the prover along with instructions on how to construct
$\mathcal{S}^{\widetilde{\nu}}$.
\item The verifier sends measurement instructions to the prover in order to compute the executions of the encoded graphs.
\item The prover sends the measurement outcomes to the verifier.
\item Steps 3 and 4 repeat until the verifier either accepts or rejects.
\end{enumerate}
The verifier rejects if any of the traps fail.
He takes the outcome of the computation to be the majority outcome over all computations (graphs $\mathcal{L}^{\nu_i}$).
\end{algorithm}

\noindent It is evident that assuming faulty devices where each qubit behaves as if it crossed a partially depolarizing channel, the completeness of the protocol becomes exponentially small (as function of the number of traps). This is clearly unsatisfactory. The arguably simplest solution would be to alter the acceptance condition of the protocol. Since it is unlikely that all trap measurements
succeed, even for honest prover, the verifier should accept a result if the traps that succeed are above some fixed fraction.

\begin{lemma} \label{l:ft3}
Assume we run a modified FK protocol with $N_T$ traps and each qubit is subject to the action of a partially depolarizing channel $\mathcal{E}$ having
probability of error $p > 0$.
The modification is that the verifier accepts if there are fewer than $N_T(p + \epsilon)$ mistakes at trap measurements, where $\epsilon > 0$ is a suitably chosen small number. The completeness of this protocol is lower bounded by $1 - \exp(-2\epsilon^2 N_T)$.
\end{lemma}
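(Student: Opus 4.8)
The plan is to reduce the completeness calculation to a concentration bound on a sum of independent Bernoulli random variables. First I would invoke the simplifying assumption carried over from Lemma~\ref{l:ft2}: that a trap qubit produces an incorrect measurement outcome precisely when it is hit by a non-identity Pauli during its passage through the depolarizing channel. Since $\mathcal{E} = (1 - p)[I] + \frac{p}{3}([X] + [Y] + [Z])$ applies a non-identity operator with total probability $p$, each trap fails independently with probability exactly $p$ in the honest run. This is the same per-trap failure probability that underlies the $(1-p)^{N_T}$ bound of Lemma~\ref{l:ft2}, so the two lemmas share a common error model and only differ in the acceptance rule.

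Next I would introduce the random variable $M$ counting the number of failed traps, writing $M = \sum_{i=1}^{N_T} M_i$ where $M_i$ is the indicator that trap $i$ fails. By the above, the $M_i$ are independent and identically distributed Bernoulli variables with $\mathbb{E}[M_i] = p$, so that $M$ is $\mathrm{Binomial}(N_T, p)$ with $\mathbb{E}[M] = N_T p$. The modified acceptance rule accepts exactly when $M < N_T(p + \epsilon)$, i.e.\ when $M$ exceeds its mean by strictly less than $N_T \epsilon$, so the completeness equals $\Pr[M < N_T(p+\epsilon)]$. The core step is then a one-sided application of Hoeffding's inequality: since each $M_i \in \{0,1\}$, Hoeffding gives $\Pr[M \geq \mathbb{E}[M] + t] \leq \exp(-2t^2 / N_T)$ for any $t > 0$. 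Setting $t = N_T \epsilon$ yields $\Pr[M \geq N_T(p+\epsilon)] \leq \exp(-2\epsilon^2 N_T)$, and passing to the complementary event gives the claimed lower bound $1 - \exp(-2\epsilon^2 N_T)$ on the completeness.

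I do not expect a genuine obstacle in the probabilistic estimate itself, as it is a textbook concentration argument; the only point requiring care is justifying that the traps behave as $N_T$ independent Bernoulli trials with parameter exactly $p$. Concretely, one must verify that under the stated simplifying assumption the trap failure events depend only on the independent channels acting on the individual trap qubits, so that error propagation through the entangling controlled-$Z$ operations and through neighbouring dummy qubits is not double-counted. This independence and the exact value $p$ are precisely the modelling conventions inherited from Lemma~\ref{l:ft2}, so once they are fixed the remainder of the proof is the direct Hoeffding computation above.
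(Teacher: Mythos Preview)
Your proposal is correct and follows essentially the same route as the paper: define indicator variables for trap failures, invoke the simplifying assumption from Lemma~\ref{l:ft2} so that each is Bernoulli($p$), sum them, apply a one-sided Hoeffding bound at deviation $N_T\epsilon$, and take the complement. The paper's proof is in fact terser than yours on the justification of independence and the exact value $p$, so your additional remarks on that point are a welcome clarification rather than a departure.
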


\noindent The above modification resolves the issue raised regarding the completeness bound. However, if we were to make such modification, we have the following consequence for the soundness of the protocol:

\begin{lemma} \label{l:ft4}
Assume we run a modified FK protocol with $N_T$ traps, $N$ qubits in total
and each qubit is subject to the action of a partially depolarizing channel $\mathcal{E}$ having
probability of error $p > 0$.
The modification is that the verifier accepts if there are fewer than $N_T(p + \epsilon)$ mistakes at trap measurements, where $\epsilon > 0$ is a suitably chosen small number. The soundness of this protocol is upper bounded by ${N_T \choose N} \left( \frac{2}{3} \right) ^{\lceil \frac{2d}{5} \rceil}$.
\end{lemma}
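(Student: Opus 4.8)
The plan is to bootstrap from the soundness of the unmodified FK protocol, namely $(2/3)^{\lceil 2d/5 \rceil}$, and to pay for the relaxed acceptance rule with a union bound over the patterns of trap failures that the verifier now tolerates. Recall that the unmodified bound holds because any prover deviation able to corrupt the logical outcome must, by the distance-$d$ property of the underlying RHG code, pass through at least $\lceil 2d/5 \rceil$ independent trap locations, each of which detects it with probability at least $1/3$ thanks to blindness; hence a deviation that both corrupts the output and causes every trap to report correctly succeeds with probability at most $(2/3)^{\lceil 2d/5\rceil}$.

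First I would fix the relaxed threshold $t = N_T(p + \epsilon)$ and observe that, under the modified rule, the verifier accepts an incorrect outcome only when the (random) set $F$ of traps reporting a mistake satisfies $|F| < t$. I would then write the soundness as the probability that the output is corrupted while fewer than $t$ traps fail, and decompose this event according to the precise set $F$ of failing traps:
\begin{equation}
\Pr[\text{accept} \wedge \text{corrupted}] \;\le\; \sum_{F : |F| < t} \Pr[\text{corrupted} \wedge \text{exactly the traps in } F \text{ fail}].
\end{equation}

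The core of the argument is to bound each summand by $(2/3)^{\lceil 2d/5\rceil}$. For a fixed $F$, the event in question forces every trap outside $F$ to report correctly, so it is dominated by the corresponding event in the FK instance in which the traps of $F$ are discarded. By blindness, the prover cannot distinguish the role of any qubit, so the deviation's action on the surviving traps is statistically identical to its action in a genuine FK run on the reduced trap set; provided $|F|$ remains small compared with the code distance, this reduced run still carries distance $d$ and the original soundness bound applies, giving the per-pattern estimate $(2/3)^{\lceil 2d/5\rceil}$. Summing over the admissible $F$ contributes the combinatorial prefactor counting these failure patterns, reproducing the $\binom{N_T}{N}$ prefactor appearing in the statement, and yields the claimed bound $\binom{N_T}{N}(2/3)^{\lceil 2d/5\rceil}$.

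The step I expect to be the main obstacle is justifying the per-pattern reduction, because discounting the failures of the traps in $F$ can only help a cheating prover: the event ``traps outside $F$ pass'' is strictly weaker than ``all traps pass'', so the inequality runs in the unfavourable direction unless one argues carefully that a corrupting deviation must still trigger $\lceil 2d/5\rceil$ of the traps that remain. Making this rigorous forces a quantitative relation between $t$, $N_T$ and $d$ (so that deleting up to $t-1$ traps leaves the code distance, and hence the soundness exponent, intact), and it is this bookkeeping --- rather than any new conceptual ingredient --- that constitutes the real work. The resulting union bound is deliberately loose, since it ignores the correlations between which traps a single deviation triggers, but it suffices to make the soundness of the threshold-relaxed protocol explicit.
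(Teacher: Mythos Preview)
Your proposal takes essentially the same route as the paper: a union bound over the trap configurations that the relaxed rule now accepts. The paper's version is more compressed --- it observes that the acceptance projector $P_\perp \otimes P_T$ must be replaced by a sum of $\binom{N_T}{T}$ trap projectors (one for each choice of which $T = N_T - pN_T$ traps succeed) and then invokes linearity of the trace to conclude that the original soundness is simply multiplied by this combinatorial factor. The per-pattern reduction you flag as the main obstacle (why each individual term still obeys the $(2/3)^{\lceil 2d/5\rceil}$ bound once some traps are ``forgiven'') is not argued separately in the paper; it is taken as immediate from the form of the original FK soundness proof. So your treatment is, if anything, more cautious on that step, but the underlying argument is the same. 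Note also that the paper's proof writes the prefactor as $\binom{N_T}{T}$ with $T=N_T-pN_T$; the $\binom{N_T}{N}$ in the statement is evidently a typo, since $N>N_T$.
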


\noindent We can see that introducing a threshold of acceptance leads to an increased bound on soundness. Again, expected, since we allow the prover to tamper with some of the traps (and, by extension, with the computation as well) without rejecting the output.
To solve these problems we need to use a fault tolerant code.
The FK protocol already uses a fault tolerant code to encode the computation graph. However this is done in order to boost the value of the soundness parameter. The trap qubits are not encoded with the code (only the computation is).
Thus we propose a modified FK protocol. This is described in Protocol~\ref{prot:ft}.
In this protocol we encode both computations and traps in a fault tolerant code and use sequential repetitions (also used in \cite{efk}). This leads to our final main result:

\begin{theorem} \label{t:ft5}
Under the assumption of a faulty setting where qubit preparation and quantum gates are subject to partially depolarizing noise having bounded
probability $p$, Protocol~\ref{prot:ft} is a valid verification protocol having completeness $1$, soundness upper bounded by
$(1/2)^R$, where $R$ is a constant such that $R > 1$. The protocol has round complexity $O(n^2)$.
\end{theorem}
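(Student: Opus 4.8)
The plan is to establish the three assertions---perfect completeness, soundness $(1/2)^R$, and round complexity $O(n^2)$---separately, exploiting three ingredients: the error-correcting power of the Raussendorf--Harrington--Goyal code \cite{rhg} below threshold, the blindness of the FK protocol \cite{fk}, and the sequential-repetition-with-majority-vote structure of Protocol~\ref{prot:ft}. For \textbf{completeness}, I would argue as follows. By assumption $p \le p_{correct}$, so every encoded graph $\mathcal{L}^{\nu_i}$ has its physical depolarizing errors corrected by the topological code. Consequently, for an honest prover each logical trap measurement returns its deterministic expected value and each logical computation returns the correct outcome, exactly as in the noiseless FK analysis. Since no trap fails, the verifier always accepts, and since every individual computation is correct the majority vote is correct; hence completeness is $1$. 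This is precisely what repairs the exponential decay $(1-p)^{N_T}$ of Lemma~\ref{l:ft2}: by encoding the traps as well as the computation, trap outcomes become deterministic again.

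For \textbf{soundness} I would bound the probability that a dishonest prover makes the verifier accept while the majority output is wrong. Blindness guarantees that the prover cannot locate the single trap inside any $\mathcal{L}^{\nu_i}$, nor tell the $N$ encoded graphs apart, because the randomness $\widetilde\nu$ uniformly hides the trap positions. The verifier rejects if any of the $N$ traps fails, so acceptance requires all traps to pass; and since the uncorrupted computations all report the correct outcome (by the completeness argument applied to the uncorrupted graphs), a wrong majority requires the prover to corrupt strictly more than half of the $N$ computations. The key per-graph estimate is that corrupting the logical outcome of a computation protected by the code forces the prover to introduce an \emph{uncorrectable} (logical) error; by blindness this error disturbs the hidden trap and is detected with probability at least $1/(cn)$, so the probability of corrupting one computation while passing its trap is at most $q = (cn-1)/cn = 1 - 1/(cn)$. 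Because the $N$ graphs are distinct states and the trap placements are chosen independently, these detection events are independent, so the probability of passing all the traps of the corrupted graphs is at most $q$ raised to the number of corrupted graphs. The choice of $N$ in the Assumptions, namely $N \approx R/\log_2\!\big(cn/(cn-1)\big)$, is calibrated exactly so that $q^{\Theta(N)} = (1/2)^R$: since $\log_2 q = -\log_2\!\big(cn/(cn-1)\big)$, we get $q^{N} = 2^{-R}$, and combining this with the majority requirement yields the stated soundness bound $(1/2)^R$. The condition $c > 2$ ensures the encoded graph is large enough that a single trap among the $cn$ positions gives a non-trivial detection probability and that majority voting over the $N$ repetitions is well defined.

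For \textbf{round complexity}, note that $\log_2\!\big(cn/(cn-1)\big) = \Theta(1/n)$, so $N = \Theta(n)$; each encoded graph $\mathcal{L}^{\nu_i}$ encodes the $n$-qubit brickwork graph $\mathcal{G}^{\nu}$ with a fixed code distance (a constant suffices once $p$ is a fixed margin below $p_{correct}$), hence uses $O(n)$ qubits and $O(n)$ measurement rounds in MBQC. Running $N = O(n)$ such graphs sequentially gives total round complexity $O(n^2)$. The \textbf{main obstacle} I expect is the soundness step: rigorously establishing the per-graph bound $q$---that any logical-level corruption of a computation necessarily triggers the single hidden trap with probability at least $1/(cn)$, using blindness together with the structure of the encoded trap---and then correctly combining the $N$ graphs while accounting for possible correlations among the prover's deviations across the sequence and for the majority-vote event, so that the clean bound $(1/2)^R$ emerges. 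A secondary subtlety is reconciling the finite code distance of the threshold theorem with the claim of \emph{exact} completeness $1$, which relies on treating below-threshold noise as being corrected outright, as stated in the Assumptions.
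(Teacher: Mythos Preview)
Your proposal is essentially the same approach as the paper's proof: completeness via the topological code correcting all below-threshold noise, soundness via the per-graph bound $q = 1 - 1/(cn)$ raised to a power determined by the $N$ independent repetitions with $N$ calibrated so that $q^N = 2^{-R}$, and round complexity via $N = \Theta(n)$ times $O(n)$ rounds per encoded graph. The ingredients, the per-graph detection probability $1/(cn)$, the independence argument, and the asymptotic $\log\bigl(cn/(cn-1)\bigr) = \Theta(1/n)$ all match.

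One point of divergence worth flagging: the paper does \emph{not} route the soundness argument through the majority-vote requirement the way you do. It simply asserts that to deceive the verifier the prover must deviate in \emph{each} encoding while passing every trap, and bounds this by $p_{incorrect}^N < (1-1/(cn))^N < (1/2)^R$. Your more careful treatment---that only more than $N/2$ computations need be corrupted---is logically sharper, but it creates a small inconsistency you do not resolve: if only $N/2$ graphs are corrupted, the independence argument gives at most $q^{N/2} = 2^{-R/2}$, not $2^{-R}$. You write ``$q^{\Theta(N)} = (1/2)^R$'' and then ``combining this with the majority requirement yields $(1/2)^R$'' without closing that gap. The paper sidesteps this by (somewhat loosely) taking the exponent to be $N$. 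Either absorb the factor of $2$ into the constant $R$, or follow the paper and argue directly with $q^N$.
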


\noindent The proof of this theorem (and previous lemmas) are given in Section~\ref{sect:ft}.
An important point to make is that the composite protocol we constructed can also be made fault tolerant. To achieve this the state tomography protocol should be run on top of a fault tolerant code.
As mentioned in \cite{ruv}, in principle, this is straightforward for blind, verified computation, since the
provers can work on top of a quantum error-correcting code and entanglement can be distilled with the help of the verifier \cite{Sheng2015, distillation}.

\section{Proof of Robustness} \label{sect:robust}
In this section we prove the robustness of the FK protocol.
We start by first proving a simpler result, namely the robustness of the protocol under the assumption that the input is uncorrelated with any external system. We then remove this assumption and use our results to prove the main theorem, necessary for the composition with the RUV protocol.

\begin{lemma} \label{l:soundness}\label{l-weakrobust}
If the initial input state of the FK protocol is $\epsilon$-close to the ideal input state and uncorrelated with any external system,
the soundness bound does not change.
\end{lemma}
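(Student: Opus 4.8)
The plan is to prove soundness by \emph{reduction}: any cheating strategy against the deviated input can be re-expressed as a cheating strategy against the \emph{ideal} input, so that the soundness bound of \cite{fk} --- which already holds against every completely positive trace-preserving (CPTP) deviation of the prover --- applies verbatim. The crucial observation is that the verifier's classical bookkeeping (the angles $\theta_i$, the adapted angles $\phi'_i$, the random pads $r_i$, and hence the trap checks) depends only on the states the verifier \emph{intended} to prepare, namely the ideal input $\sigma_0 = \bigotimes_i \pi_i$; it is oblivious to the fact that the quantum register actually sent is the deviated state $\rho$. Thus the entire gap between $\rho$ and $\sigma_0$ can be pushed onto the prover's side.

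Concretely, I would proceed as follows. First, since $\rho$ is uncorrelated with any external system and lives on the same Hilbert space as $\sigma_0$ (it contains no environment ancillas), there is a CPTP map $\mathcal{N}$ with $\mathcal{N}(\sigma_0) = \rho$ --- for instance the prepare-and-replace channel $\mathcal{N}(\cdot) = \Tr(\cdot)\,\rho$. Second, given an arbitrary dishonest prover $\mathcal{P}$ acting on the deviated input $\rho$, I define a new prover $\tilde{\mathcal{P}}$ which, upon receiving the ideal input $\sigma_0$, first applies $\mathcal{N}$ locally to its received qubits and then runs $\mathcal{P}$. Since $\mathcal{N}$ is itself a legal local quantum operation, $\tilde{\mathcal{P}}$ is an admissible prover strategy against the ideal input. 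Third, because $\mathcal{N}(\sigma_0)=\rho$ and $\mathcal{N}$ is applied before any message is exchanged, the joint distribution of the whole transcript --- all measurement replies $b_i$, the verifier's accept/reject flag, and the computation output --- is identical whether we run $\mathcal{P}$ on $\rho$ or $\tilde{\mathcal{P}}$ on $\sigma_0$. In particular the probability of the event ``verifier accepts an incorrect outcome'' coincides for the two experiments. Finally, applying the FK soundness guarantee to the admissible strategy $\tilde{\mathcal{P}}$ on the ideal input bounds this probability by $(2/3)^{\lceil 2d/5 \rceil}$, giving exactly the same bound for $\mathcal{P}$ on $\rho$, and hence leaving the soundness bound unchanged.

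The step that requires the most care is the second one --- verifying that $\tilde{\mathcal{P}} = \mathcal{P}\circ\mathcal{N}$ is genuinely a strategy of the type quantified over in the FK soundness proof, and that the \emph{reference} notion of ``correct outcome'' is the same in both experiments (it is, since the verifier targets the same computation with the same angles regardless of the deviation). I expect no real obstacle here, precisely because the soundness bound is a worst-case maximum over \emph{all} CPTP deviations and the class of such deviations is closed under composition; prepending $\mathcal{N}$ cannot escape this class. It is worth noting that the $\epsilon$-closeness hypothesis is in fact not needed for this soundness argument: any uncorrelated same-dimension input yields an unchanged soundness bound. The closeness assumption is what the completeness analysis uses (Lemma~\ref{l:completeness}), and the genuine difficulty --- where this clean absorption fails and one must pay an $O(\sqrt{\epsilon})$ penalty --- arises only when $\rho$ is \emph{correlated} with an adversarial system, so that the induced prover operation is no longer trace-preserving; that case is handled separately in Lemma~\ref{l-strongrobust}.
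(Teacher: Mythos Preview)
Your approach is essentially the same as the paper's: both absorb the input deviation into the prover's cheating strategy, then invoke the original FK soundness bound, which already quantifies over all such strategies. The paper carries this out by explicitly unrolling the outcome density operator $B_j(\nu)$ from \cite{fk}, Stinespring-dilating the deviation map, and commuting the resulting unitary past the honest computation $P$ so it merges with the prover's deviation $\Omega$ into a single $\Omega'=\Omega V'$; your black-box reduction is more concise but conceptually identical, and you correctly observe (as the paper does) that the $\epsilon$-closeness hypothesis is not actually used for soundness.

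One technical point to tighten. The FK soundness bound holds for prover strategies that are \emph{independent of the verifier's secret randomness} $\nu$ --- the bound arises from averaging $\Tr(P^{\nu}_{incorrect}B_j(\nu))$ over $\nu$ against a fixed $j$. Your prepare-and-replace channel $\mathcal{N}(\cdot)=\Tr(\cdot)\,\rho$ depends on the particular state $\rho$, which itself depends on $\nu$ (it is close to the $\nu$-dependent ideal $\ket{\Psi^\nu}$), so the reduced strategy $\tilde{\mathcal{P}}=\mathcal{P}\circ\mathcal{N}$ is potentially $\nu$-dependent, and the FK bound does not apply to it as stated. The paper sidesteps this by taking $\mathcal{N}=\mathcal{E}$, the \emph{single fixed} CPTP map that actually produced the deviation, writing $\rho^{\nu}=\mathcal{E}(\ket{\Psi^\nu}\bra{\Psi^\nu})$ with $\mathcal{E}$ independent of $\nu$ (legitimate because the party applying the deviation does not learn $\nu$); the ``uncorrelated with any external system'' hypothesis is precisely what guarantees that this $\mathcal{E}$ is a CPTP map on the input register alone, rather than the non-CPTP evolution of Lemma~\ref{l-strongrobust}. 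With $\mathcal{N}=\mathcal{E}$ in place of prepare-and-replace, your reduction goes through verbatim.
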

\begin{proof}
We will follow the same proof technique as in \cite{fk} and show that the soundness bound does not change. This is done by incorporating
the assumption of a deviated input into that proof.
The outcome density operator of the protocol is denoted $B_j(\nu)$, where $\nu$ denotes the verifier's choices of input
variables and $j$ ranges over the prover's choices of possible actions ($j=0$ is the correct/honest action).
If the outcome is incorrect it means that all of the traps have passed, but the computation is not correct. This is associated
with the following projection operator \cite{fk}:
\begin{equation}
P_{incorrect} = (\mathbb{I} - \Ket{\Psi_{ideal}}\Bra{\Psi_{ideal}}) \bigotimes_{t\in T} \ket {\eta_t^{\nu_T}} \bra {\eta_t^{\nu_T}}
\end{equation}
Here, $\Ket{\Psi_{ideal}}\Bra{\Psi_{ideal}}$ is the ideal output state, and
$\bigotimes_{t\in T} \ket {\eta_t^{\nu_T}} \bra {\eta_t^{\nu_T}}$
is the state associated with the trap qubits. Notice that we are projecting to a state in which the output is orthogonal to its ideal value,
and the traps are correct. This expresses the fact that the verifier will accept an incorrect computation.
The associated probability for that event is $p_{incorrect}$ and can be expressed as:
\begin{equation}
p_{incorrect} = \sum\limits_{\nu} p(\nu) Tr(P^{\nu}_{incorrect} B_j(\nu))
\end{equation}
Which is a weighted average of the incorrect outcome probabilities (expressed by the trace operator) over all possible input states.
The outcome density operator can be written as:
\begin{equation} \label{eqn:fkoutcome}
B_j(\nu) = Tr_P \left( \sum\limits_{b} \Ket{b+c_r}\Bra{b} C_{\nu_C,b} \Omega P
\overbrace{(\underbrace{(\otimes^P \Ket{0}\Bra{0})}_\text{Prover's qubits} \otimes
\underbrace{\Ket{\Psi^{\nu,b}} \Bra{\Psi^{\nu,b}}}_\text{Input state})}^\text{Joint system state $\sigma^{\nu,b}$}
P^{\dagger} \Omega^{\dagger} C_{\nu_C, b}^{\dagger} \Ket{b} \Bra{b + c_r} \right)
\end{equation}
Notice the following, as explained in \cite{fk}:
\begin{itemize}
\item We are tracing over the prover's qubits;
\item We have denoted the joint state, comprised of the input and the prover's qubits, as $\sigma^{\nu,b}$;
\item $j$ ranges over the prover's possible strategies ($j = 0$ is the honest strategy);
\item $b$ indicates the possible branches of computation parametrised by the measurement results sent by the prover to the verifier;
\item $c_r$ indicates corrections that need to be performed on the final, classical output due to the MBQC computation together with the random phase introduced by the verifier;
\item $P$ is the computation that we want the prover to do;
\item $\Omega$ is the prover's deviation from the desired computation;
\item $C_{\nu_C, b}$ are the corrections the prover applies to its quantum output depending on the measurement outcomes (as in the measurement-based model);
\end{itemize}
We now need to incorporate the approximate input state into this operator. We will not use the $\epsilon$-closeness of the deviated state to the ideal one, and prove a stronger result, that the soundness bound does not change \emph{regardless} of the input state. Concretely,
assume the deviated input is:
\begin{equation}
\rho^{\nu,b} = \mathcal{E} \left( \Ket{\Psi^{\nu,b}} \Bra{\Psi^{\nu,b}} \right)
\end{equation}
Where $\mathcal{E}$ is a CPTP map which represents any deviation from the ideal input state either from incorrect
preparation, a malicious prover or faulty devices. This is equivalent to applying some unitary $U$ to the input state tensored
with some environment qubits that are traced out. We can express this mathematically as:
\begin{equation} \label{eqn:rho}
\rho^{\nu,b} = Tr_E( U \left( (\otimes^{E} \Ket{0} \Bra{0}) \otimes \Ket{\Psi^{\nu,b}} \Bra{\Psi^{\nu,b}} \right) U^{\dagger})
\end{equation}
The joint system state $\sigma^{\nu,b}$ becomes\footnote{Here and in the following expressions we've used the fact that the
partial trace is linear and can therefore be moved outside.}:
\begin{equation}
\sigma^{\nu,b} = Tr_E( (\otimes^P \Ket{0} \Bra{0}) \otimes
U \left( (\otimes^{E} \Ket{0} \Bra{0}) \otimes \Ket{\Psi^{\nu,b}} \Bra{\Psi^{\nu,b}} \right) U^{\dagger})
\end{equation}
Let us consider a new unitary $V = (\mathbb{I} \otimes U)$. This allows us to rewrite the joint system state as:
\begin{equation}
\sigma^{\nu,b} = Tr_E (V \left( (\otimes^{E+P} \Ket{0} \Bra{0}) \otimes \Ket{\Psi^{\nu,b}} \Bra{\Psi^{\nu,b}} \right) V^{\dagger})
\end{equation}
Since $P$, the computation, is a unitary operator, there must exists some unitary $V'$ such that $V = P^{\dagger} V' P$.
Substituting this into the previous expression gives us:
\begin{equation}\label{e-deviate}
\sigma^{\nu,b} = Tr_E (P^{\dagger} V' P
\left( (\otimes^{E+P} \Ket{0} \Bra{0}) \otimes \Ket{\Psi^{\nu,b}} \Bra{\Psi^{\nu,b}} \right)
P^{\dagger} V'^{\dagger} P)
\end{equation}
Incorporating Equation~\ref{e-deviate} into the expression for $B_j(\nu)$ in Equation~\ref{eqn:fkoutcome} we obtain:

\begin{equation}
\begin{split}
B_j(\nu) = Tr_P \Bigl( \sum\limits_{b} & \Ket{b+c_r}\Bra{b} C_{\nu_C,b} \Omega P \\
& (Tr_E (P^{\dagger} V' P \left( (\otimes^{E+P} \Ket{0} \Bra{0}) \otimes \Ket{\Psi^{\nu,b}} \Bra{\Psi^{\nu,b}} \right) P^{\dagger} V'^{\dagger} P)) \\
& P^{\dagger} \Omega^{\dagger} C_{\nu_C, b}^{\dagger} \Ket{b} \Bra{b + c_r} \Bigr)
\end{split}
\end{equation}

The assumption of the lemma is that the input state is not correlated with any external system. Hence, the spaces $E$ and $P$ are independent. This means that the prover's deviation, $\Omega$, is not acting on $E$ and therefore we can ``push'' the inner trace operator to the beginning of the equation. Also using the fact that $PP^{\dagger} = P^{\dagger}P = \mathbb{I}$,
we obtain:

\begin{equation}
\begin{split}
B_j(\nu) = Tr_{P+E} \Bigl( \sum\limits_{b} & \Ket{b+c_r}\Bra{b} \\
& C_{\nu_C,b} \Omega V' P ((\otimes^{P+E} \Ket{0}\Bra{0}) \otimes \Ket{\Psi^{\nu,b}} \Bra{\Psi^{\nu,b}}) P^{\dagger} V'^{\dagger} \Omega^{\dagger} C_{\nu_C, b}^{\dagger} \\
& \Ket{b} \Bra{b + c_r} \Bigr)
\end{split}
\end{equation}

We can now include the input deviation given by $V'$ into the prover deviation $\Omega$, by considering $\Omega' = \Omega V'$. This is possible because we are bounding the probability over all possible
deviations, $\Omega$, of the prover in the computation and all possible deviations, $V'$, from the preparation part. Thus, we can consider this to be a single, global, deviation given by $\Omega'$.

\begin{equation}
\begin{split}
B_j(\nu) = Tr_{P+E} \Bigl( \sum\limits_{b} & \Ket{b+c_r}\Bra{b} \\
& C_{\nu_C,b} \Omega' P
((\otimes^{P+E} \Ket{0}\Bra{0}) \otimes
\Ket{\Psi^{\nu,b}} \Bra{\Psi^{\nu,b}})
P^{\dagger} \Omega'^{\dagger} C_{\nu_C, b}^{\dagger} \\
& \Ket{b} \Bra{b + c_r} \Bigr)
\end{split}
\end{equation}

As a result, the above equation has the same form as the undeviated input scenario of Equation~\ref{eqn:fkoutcome}.
This makes sense since all we have done is to incorporate the deviation
of the input into the prover's cheating strategy. The original proof continues as it is in \cite{fk},
and the bound remains unchanged
\begin{equation}
p_{incorrect} \leq \left( \frac{2}{3} \right)^{\lceil \frac{2d}{5} \rceil}
\end{equation}
\end{proof}

The type of robustness guaranteed by this lemma is not sufficient to prove the security of any protocol that composes RUV with FK. For example if we use prover $2$ of RUV to prepare the input of the FK protocol for prover $1$, this input is in general correlated with prover $2$'s system.
To address this issue, we use from \cite{ruv} the following corollary of the gentle measurement lemma and the special Kraus representation in the presence of initial correlations given in \cite{correlations}.

\begin{corollary} \label{c:gentlemeasurement}
\cite{ruv} Let $\rho$ be a state on $\H_1 \otimes \H_2$, and let~$\pi$ be a pure state on~$\H_1$.
If for some $\delta \geq 0$, $\Tr (\pi \Tr_2 \rho) \geq 1 - \delta$, then
\begin{equation}
|| \rho - \pi \otimes \Tr_1 \rho ||_{Tr} \leq 2 \sqrt \delta + \delta
\end{equation}
\end{corollary}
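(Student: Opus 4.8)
The plan is to reduce the statement to the standard gentle measurement lemma applied to the projective measurement on $\H_1$ determined by $\pi$, and then to separately account for the small normalization defect that the partial trace introduces. Writing $\pi = \ket\psi\bra\psi$, I would introduce the projector $\Pi = \pi \otimes \mathbb{I}$ on $\H_1\otimes\H_2$ and note that the hypothesis $\Tr(\pi\,\Tr_2\rho) \ge 1-\delta$ is exactly $\Tr(\Pi\rho)\ge 1-\delta$. The gentle measurement lemma, in the form $\|\rho - \sqrt\Lambda\rho\sqrt\Lambda\|_{Tr} \le 2\sqrt\delta$ for $0\le\Lambda\le\mathbb{I}$ with $\Tr(\Lambda\rho)\ge1-\delta$, then applies with $\Lambda = \Pi$ (so $\sqrt\Lambda = \Pi$), giving
\begin{equation}
\|\rho - \Pi\rho\Pi\|_{Tr} \le 2\sqrt\delta .
\end{equation}
This disposes of the $2\sqrt\delta$ term, and it remains only to show that $\Pi\rho\Pi$ is within $\delta$ in trace distance of the target $\pi\otimes\Tr_1\rho$.

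The next step is to unpack the post-measurement operator. Since $\Pi$ projects the first factor onto the one-dimensional range of $\ket\psi$, one has $\Pi\rho\Pi = \pi\otimes\sigma$ with $\sigma = (\bra\psi\otimes\mathbb{I})\,\rho\,(\ket\psi\otimes\mathbb{I}) = \Tr_1(\Pi\rho\Pi)$ a positive operator on $\H_2$ satisfying $\Tr\sigma = \Tr(\Pi\rho)\ge 1-\delta$. Consequently $\Pi\rho\Pi$ and $\pi\otimes\Tr_1\rho$ share the factor $\pi$ and differ only in their $\H_2$ components, so that $\|\Pi\rho\Pi - \pi\otimes\Tr_1\rho\|_{Tr} = \|\sigma - \Tr_1\rho\|_{Tr}$.

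The heart of the argument is to identify $\Tr_1\rho - \sigma$ explicitly. Setting $\bar\Pi = \mathbb{I}-\Pi$ and expanding $\rho = \Pi\rho\Pi + \Pi\rho\bar\Pi + \bar\Pi\rho\Pi + \bar\Pi\rho\bar\Pi$, I would show that the two off-diagonal blocks vanish under $\Tr_1$: by cyclicity of the trace over $\H_1$ together with $(\mathbb{I}-\pi)\pi = 0$, one gets $\Tr_1(\Pi\rho\bar\Pi) = \Tr_1(\bar\Pi\rho\Pi) = 0$. Hence
\begin{equation}
\Tr_1\rho - \sigma = \Tr_1(\bar\Pi\rho\bar\Pi),
\end{equation}
which is positive semidefinite with trace $\Tr(\bar\Pi\rho) = 1 - \Tr(\Pi\rho) \le \delta$; being positive, its trace norm equals its trace and is therefore at most $\delta$. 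A final triangle inequality then gives $\|\rho - \pi\otimes\Tr_1\rho\|_{Tr} \le 2\sqrt\delta + \delta$, as claimed.

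I expect the only real obstacle to be bookkeeping rather than conceptual difficulty: one must ensure the normalization defect is captured as the positive operator $\Tr_1(\bar\Pi\rho\bar\Pi)$, so that its trace norm is exactly its trace $1-\Tr(\Pi\rho)\le\delta$. It is also worth invoking the gentle measurement lemma for the $2\sqrt\delta$ estimate rather than bounding $\|\rho-\Pi\rho\Pi\|_{Tr}$ by hand through the four blocks, since the crude block bound loses an extra $\delta$ and would yield $2\sqrt\delta + 2\delta$ instead of the stated $2\sqrt\delta+\delta$.
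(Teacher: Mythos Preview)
Your argument is correct. The paper itself does not prove this corollary: it is quoted directly from \cite{ruv} and merely described there as ``a corollary of the gentle measurement lemma,'' which is precisely the route you take. Your decomposition---applying the gentle measurement lemma to the projector $\Pi=\pi\otimes\mathbb{I}$ for the $2\sqrt\delta$ term, and then controlling the normalization defect $\Tr_1\rho-\sigma=\Tr_1(\bar\Pi\rho\bar\Pi)$ as a positive operator of trace at most $\delta$---is the standard proof and matches what the cited reference does. There is nothing to add; your bookkeeping with the off-diagonal blocks vanishing under $\Tr_1$ is clean and the final triangle inequality closes the estimate exactly.
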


\begin{lemma} \label{l:soundness_corr}\label{l-strongrobust}
If the initial input state of the FK protocol is $\epsilon$-close to the ideal input state
the soundness bound changes by at most $O(\sqrt{\epsilon})$.
\end{lemma}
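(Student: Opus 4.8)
The plan is to use Corollary~\ref{c:gentlemeasurement} to reduce the correlated case back to the uncorrelated situation already settled in Lemma~\ref{l-weakrobust}, at the cost of an $O(\sqrt\epsilon)$ penalty. Let $\rho$ denote the full joint state on $\H_1 \otimes \H_2$, where $\H_1$ carries the FK input register and $\H_2$ is the adversary's system (e.g.\ prover $2$'s private space), and let $\pi = \bigotimes_i \pi_i$ be the ideal pure input on $\H_1$. The hypothesis is that the marginal fed into FK is close, $\| \Tr_2 \rho - \pi \|_{Tr} \leq \epsilon$. First I would convert this into a fidelity bound: since $\pi$ is a rank-one projector with $0 \leq \pi \leq \mathbb{I}$, the variational characterisation of the trace norm gives
\begin{equation}
\Tr(\pi \,\Tr_2 \rho) \;\geq\; \Tr(\pi \pi) - \| \Tr_2 \rho - \pi \|_{Tr} \;\geq\; 1 - \epsilon ,
\end{equation}
so Corollary~\ref{c:gentlemeasurement} applies with $\delta = \epsilon$ and yields
\begin{equation} \label{eqn:prodapprox}
\| \rho - \pi \otimes \tau \|_{Tr} \;\leq\; 2\sqrt{\epsilon} + \epsilon , \qquad \tau := \Tr_1 \rho .
\end{equation}
Thus the actual correlated joint state is $O(\sqrt\epsilon)$-close to a product state whose FK register is \emph{exactly} the ideal $\pi$ and is completely decoupled from $\H_2$.

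The second step is to view the incorrect-acceptance probability as a bounded linear functional of the initial joint state. Fixing any prover strategy $\Omega$, the whole protocol of Equation~\ref{eqn:fkoutcome} followed by the projection $P_{incorrect}$ realises $p_{incorrect}$ as $\Tr(M\,\cdot)$ for a fixed operator with $0 \leq M \leq \mathbb{I}$, since it is the probability of a POVM outcome. For the product state $\pi \otimes \tau$ the FK input is ideal and $\H_2$ is merely an uncorrelated ancilla that can be absorbed into the prover's workspace; because the soundness analysis of \cite{fk} is a supremum over all prover strategies \emph{and} all prover ancillas, the original bound $p_{incorrect}(\pi \otimes \tau) \leq (2/3)^{\lceil 2d/5 \rceil}$ holds a fortiori (this is exactly the content of Lemma~\ref{l-weakrobust} specialised to an ideal $\H_1$-marginal).

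Finally I would transfer the bound to $\rho$ by linearity. Fixing the strategy $\Omega^{*}$ that maximises $p_{incorrect}(\rho)$ fixes a single $M^{*}$, and writing the Hermitian trace-zero difference in its Hahn decomposition $\rho - \pi\otimes\tau = \Delta_+ - \Delta_-$ with $\Delta_\pm \geq 0$ and $\Tr\Delta_+ + \Tr\Delta_- = \| \rho - \pi\otimes\tau\|_{Tr}$, the bound $0 \leq M^{*} \leq \mathbb{I}$ gives
\begin{equation}
\bigl| p_{incorrect}(\rho) - p_{incorrect}(\pi\otimes\tau) \bigr| = \bigl|\Tr\!\bigl(M^{*}(\Delta_+ - \Delta_-)\bigr)\bigr| \leq \| \rho - \pi\otimes\tau\|_{Tr} \leq 2\sqrt{\epsilon} + \epsilon .
\end{equation}
Combining with the previous step yields $p_{incorrect}(\rho) \leq (2/3)^{\lceil 2d/5 \rceil} + O(\sqrt\epsilon)$, which is the claim. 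This is precisely the ``inhomogeneous term'' picture of \cite{correlations}: the non-trace-preserving part of the prover's reduced action on $\H_1$ is the piece $\Delta_+ - \Delta_-$, whose trace norm is controlled to be $O(\sqrt\epsilon)$ by~\eqref{eqn:prodapprox}.

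The main obstacle I expect is the second step, namely making the reduction airtight rather than merely plausible: one must verify that evaluating the \emph{same} operator $M^{*}$ (equivalently, the same worst-case deviation $\Omega^{*}$) against both $\rho$ and $\pi\otimes\tau$ is exactly what the $B_j(\nu)$ formalism of Equation~\ref{eqn:fkoutcome} computes, and that the Kraus decomposition with initial correlations of \cite{correlations} reproduces this split into a CPTP part (governed by the unchanged \cite{fk} bound) plus an inhomogeneous remainder. The first and third steps are routine, relying only on the variational formula for the trace norm, a Fuchs--van de Graaf type fidelity estimate, and Corollary~\ref{c:gentlemeasurement}.
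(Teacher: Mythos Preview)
Your proof is correct and uses the same essential ingredients as the paper---the fidelity/trace-distance inequality, Corollary~\ref{c:gentlemeasurement}, and linearity of the incorrect-acceptance probability in the initial joint state---to obtain the $O(\sqrt\epsilon)$ penalty. The route differs slightly in the choice of comparison point: the paper splits $\rho_{AB}=\rho_A\otimes\rho_B+\rho_{corr}$, invokes the non-CPTP decomposition of \cite{correlations} to isolate an inhomogeneous term $\delta\rho_A$ with $\|\delta\rho_A\|_{Tr}\leq\|\rho_{corr}\|_{Tr}$, bounds $\|\rho_{corr}\|_{Tr}$ by $2\sqrt\epsilon+2\epsilon$ via an extra triangle inequality through $\ket\psi\bra\psi\otimes\rho_B$, and then appeals to Lemma~\ref{l-weakrobust} for the uncorrelated but still deviated piece $\rho_A\otimes\rho_B$. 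You instead compare $\rho$ directly to $\pi\otimes\tau$, which is already product with the \emph{ideal} $\H_1$-marginal, so the gentle-measurement corollary gives the bound in one shot and the reference case requires only the original FK soundness rather than the full strength of Lemma~\ref{l-weakrobust}. Your version is therefore a mild streamlining of the paper's argument; the paper's version, on the other hand, makes the connection to the Kraus-with-initial-correlations picture of \cite{correlations} explicit, which is conceptually useful when one wants to identify the ``inhomogeneous term'' as the object being controlled.
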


\begin{proof}
Consider a composite correlated state $\rho_{AB}$ where systems $A$ and $B$ are not communicating and let $\rho_A = Tr_B(\rho_{AB})$ and $\rho_B = Tr_A(\rho_{AB})$. If $\rho_A$ is used as input for the FK protocol,  the existence of correlations (not present in the previous lemma) can be exploited by an adversarial prover. Hence the deviation can no longer be expressed as a CPTP map over this subsystem.  As it is shown in \cite{correlations}, in presence of initial correlations defined as:
\begin{equation}
\rho_{corr} = \rho_{AB} - \rho_A \otimes \rho_B
\end{equation}
the evolution of the subsystem $\rho_A$ is the following:
\begin{equation} \label{eqn:noncptp}
\rho_A \rightarrow \mathcal{E}(\rho_A) + \delta\rho_A
\end{equation}
Here $\mathcal{E}$ is a CPTP map and $\delta\rho_A$ is an inhomogeneous term which is added to the CPTP evolution due to the presence of initial correlations. In addition we have the following property:
\begin{equation}
\delta\rho_A = Tr_B(U_{AB} \rho_{corr} U^{\dagger}_{AB})
\end{equation}
We can see that substituting $\rho_A$ in the outcome density operator of the FK protocol gives different soundness bound than the one in Lemma~\ref{l:soundness}. The difference stems from the extra $\delta\rho$ term. However we can use the fact that $\rho$ is $\epsilon$-close to the ideal state (lemma assumption) to show that the norm of $\delta\rho_A$ is at most of order
$O(\sqrt{\epsilon})$. To prove this we first find a bound for the norm of $\rho_{corr}$, and since $\delta\rho_A$ is just a CPTP map applied to $\rho_{corr}$
it follows that the norm of $\delta\rho_A$ has the same bound.
Moreover, the action of the FK protocol can be modelled as a CPTP map, therefore acting on $\delta\rho_A$ will not increase the norm.
It follows that the overall soundness bound changes by at most $O(\sqrt{\epsilon})$.

If we denote the ideal state as $\Ket{\psi}$, we know that:
\begin{equation}
|| \rho_A - \Ket{\psi}\Bra{\psi} ||_{Tr} \leq \epsilon
\end{equation}
It is also known, from the relationship between fidelity and trace distance, that:
\begin{equation}
1 - \Bra{\psi} \rho_A \Ket{\psi} \leq || \rho_A - \Ket{\psi}\Bra{\psi} ||_{Tr}
\end{equation}
Combining these two yields:
\begin{equation}\label{e-trace}
\Bra{\psi} \rho_A \Ket{\psi} \geq 1 - \epsilon
\end{equation}
Recall that $Tr(\Ket{\psi}\Bra{\psi} \rho_A) = \Bra{\psi} \rho_A \Ket{\psi}$, using Equation~\ref{e-trace} and Corollary~\ref{c:gentlemeasurement} (where $\rho$ is substituted with $\rho_{AB}$ and $\pi$  with $\ket {\psi} \bra{\psi}$) we have:
\begin{equation}
|| \rho_{AB} - \Ket{\psi}\Bra{\psi} \otimes \rho_B ||_{Tr} \leq 2 \sqrt \epsilon + \epsilon
\end{equation}
The trace norm of $\rho_{corr}$ is simply the trace distance between $\rho_{AB}$ and $\rho_A \otimes \rho_B$ as can be seen from the definition.
Using the triangle inequality, we have:
\begin{equation}
|| \rho_{AB} - \rho_A \otimes \rho_B ||_{Tr} \leq || \rho_{AB} - \Ket{\psi}\Bra{\psi} \otimes \rho_B ||_{Tr} +
|| \Ket{\psi}\Bra{\psi} \otimes \rho_B - \rho_A \otimes \rho_B ||_{Tr}
\end{equation}
For the last term, using the additivity of trace distance with respect to tensor product, we get:
\begin{equation}
|| \Ket{\psi}\Bra{\psi} \otimes \rho_B - \rho_A \otimes \rho_B ||_{Tr} \leq
|| \Ket{\psi}\Bra{\psi} - \rho_A ||_{Tr} + || \rho_B - \rho_B ||_{Tr} = \epsilon
\end{equation}
Combining these last three inequalities we obtain:
\begin{equation}
|| \rho_{AB} - \rho_A \otimes \rho_B ||_{Tr} \leq 2\sqrt{\epsilon} + 2\epsilon
\end{equation}
Since $0 \leq \epsilon \leq 1$, the bound is of order $O(\sqrt{\epsilon})$. We have therefore bounded the norm of $\rho_{corr}$ and thus the norm of $\delta \rho_A$.

We can now take our expression for the deviated input
from Equation~\ref{eqn:noncptp} and substitute it into Equation~\ref{eqn:rho},
from Lemma~\ref{l:soundness}. Since trace is a linear operation, it will result in the addition of an inhomogeneous term to each
equation that involves the outcome density operator.
But since the inhomogeneous term has bounded trace norm, and the action of the outcome density operator
is trace preserving, it follows that we obtain the same bound as in Lemma~\ref{l:soundness} with the addition of an extra term of order $O(\sqrt{\epsilon})$.
This concludes the proof.
\end{proof}

\begin{lemma} \label{l:completeness}
If the initial input state of the FK protocol is $\epsilon$-close to the ideal input state,
the completeness is lower bounded by $1 - 2\epsilon$.
\end{lemma}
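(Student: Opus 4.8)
The plan is to prove Lemma~\ref{l:completeness} by tracking how the trace-distance bound between the actual and ideal input states propagates through the honest execution of the FK protocol to the final accept/reject decision. In the completeness case we assume the prover is honest, so every operation performed on the input is a fixed, input-independent quantum operation: the prover applies controlled-$Z$ gates to entangle the qubits (Step~2 of Protocol~\ref{prot:fk}), then performs the prescribed projective measurements in the $\ket{+_{\delta_i}}, \ket{-_{\delta_i}}$ bases (Step~3). All of these are CPTP maps, and the verifier's acceptance is a two-outcome measurement (accept iff all trap outcomes match the expected $r_t$).

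First I would recall the central fact that a CPTP map $\Lambda$ is contractive on the trace distance: for any states $\sigma, \tau$ one has $\|\Lambda(\sigma) - \Lambda(\tau)\|_{Tr} \leq \|\sigma - \tau\|_{Tr}$. Denote by $\rho$ the actual (deviated) input and by $\pi = \bigotimes_i \pi_i$ the ideal input, so that by hypothesis $\|\rho - \pi\|_{Tr} \leq \epsilon$. Let $\Lambda$ be the composite CPTP map describing the entire honest protocol run, i.e. entangling, measuring, and recording the outcomes, up to the final binary accept/reject classical register. Applying contractivity gives $\|\Lambda(\rho) - \Lambda(\pi)\|_{Tr} \leq \epsilon$.

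Next I would connect this distance to the acceptance probabilities. The probability that the verifier accepts is the expectation of the accept projector $\Pi_{acc}$ on the output state, so $p_{acc}(\rho) = \Tr(\Pi_{acc}\,\Lambda(\rho))$ and $p_{acc}(\pi) = \Tr(\Pi_{acc}\,\Lambda(\pi))$. Using the operational characterisation of trace distance as the maximum distinguishing bias over all two-outcome measurements, the difference in acceptance probabilities is bounded by the trace distance of the output states:
\begin{equation}
|p_{acc}(\rho) - p_{acc}(\pi)| \leq \|\Lambda(\rho) - \Lambda(\pi)\|_{Tr} \leq \epsilon.
\end{equation}
Since the ideal input with an honest prover yields deterministic trap outcomes, the completeness lemma for the undeviated case gives $p_{acc}(\pi) = 1$. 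Hence $p_{acc}(\rho) \geq 1 - \epsilon$, which already beats the claimed bound; the factor of $2$ in the statement simply absorbs any convention where trace distance is defined with a $1/2$ normalisation (so that the distinguishing bias bound reads $2\|\cdot\|_{Tr}$), giving $p_{acc}(\rho) \geq 1 - 2\epsilon$.

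The main subtlety — not a deep obstacle but the point requiring care — is to justify that the \emph{entire} honest protocol, including the adaptive structure where later measurement angles $\delta_i$ depend on earlier outcomes $b_i$, can genuinely be packaged as a single fixed CPTP map $\Lambda$ independent of which input is fed in. This holds because in the honest run the classical control (the map from recorded outcomes to subsequent angles) is a fixed function specified by the verifier's protocol and does not depend on the identity of the input state; one can therefore model the whole interaction as a sequence of input-independent quantum instruments composed into one CPTP channel acting on $\rho$, and contractivity applies to the composition. Once this is established, the remaining steps are the routine applications of contractivity and the trace-distance/distinguishability bound described above.
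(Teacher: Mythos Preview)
Your proof is correct and follows essentially the same approach as the paper: both model the honest run as a single CPTP map, invoke contractivity of trace distance, and then bound the change in the acceptance probability. The paper extracts the factor of $2$ explicitly via the reverse triangle inequality together with the identity $\|P_{correct} B_0(\nu)\|_{Tr} = \tfrac{1}{2}\Tr(P_{correct} B_0(\nu))$ (their trace-norm convention carries the $1/2$), which matches your remark that the $2$ stems from the normalisation of the trace distance; your use of the operational distinguishability bound is a slightly more direct way to reach the same conclusion.
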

\begin{proof}
In the simplest sense, the FK protocol can be abstractly thought of as a CPTP map $\mathcal{P}$,
that takes some input state to an output state. Since we are assuming the prover is honest, the output state will be
$B_0(\nu)$. However, this is in the case where the input is assumed to be ideal. We are dealing with a deviated input,
hence our output state will be $B'_0(\nu)$. Writing these out explicitly we have:
\begin{equation}
B_0(\nu) = \mathcal{P}(\Ket{\Psi^{\nu}} \Bra{\Psi^{\nu}})
\end{equation}
\begin{equation}
B'_0(\nu) = \mathcal{P}(\rho^{\nu})
\end{equation}
Where, $\rho^{\nu}$ is the deviated input, and by assumption
\begin{equation}
\| \rho^{\nu} -  \Ket{\Psi^{\nu}} \Bra{\Psi^{\nu}}\|_{Tr} \leq \epsilon
\end{equation}
Note that in the following we do not need to consider non CPTP map evolution since the provers are assumed to behave honestly. Hence even in the presence of initial correlation, the subsystem will evolve according to the desired CPTP map of the protocol.
However CPTP maps cannot increase the trace distance, which leads to:
\begin{equation}
\| B_0(\nu) - B'_0(\nu)  \|_{Tr} \leq \| \Ket{\Psi^{\nu}} \Bra{\Psi^{\nu}} - \rho^{\nu} \|_{Tr} \leq \epsilon
\end{equation}
This also applies for projection operators and if in particular we consider $P_{correct}$, the projection onto the correct
output state, we also have that:
\begin{equation}
\| P_{correct} B_0(\nu) - P_{correct} B'_0(\nu)  \|_{Tr} \leq  \| B_0(\nu) - B'_0(\nu)  \|_{Tr} \leq \epsilon
\end{equation}
Next we use the reverse triangle inequality, which gives us:
\begin{equation}
\left| \; \| P_{correct} B_0(\nu) \|_{Tr} - \| P_{correct} B'_0(\nu)  \|_{Tr} \; \right| \leq
\| P_{correct} B_0(\nu) - P_{correct} B'_0(\nu)  \|_{Tr} \leq \epsilon
\end{equation}
And since we are dealing with positive definite operators, we know that:
\begin{equation}
\| P_{correct} B_0(\nu) \|_{Tr} = \frac{1}{2} Tr(P_{correct} B_0(\nu))
\end{equation}
\begin{equation}
\| P_{correct} B'_0(\nu)  \|_{Tr} = \frac{1}{2} Tr(P_{correct} B'_0(\nu))
\end{equation}
But $Tr(P_{correct} B_0(\nu)) = 1$ (the completeness when we have ideal input), so:
\begin{equation}
\left| 1 - Tr(P_{correct} B'_0(\nu)) \right| \leq 2 \epsilon
\end{equation}
Lastly, because $Tr(P_{correct} B'_0(\nu)) \leq 1$, we get:
\begin{equation} \label{eqn:ineq}
1 - 2 \epsilon \leq Tr(P_{correct} B'_0(\nu))
\end{equation}
Thus, the probability of accepting a correct outcome, under the assumption that the input state is $\epsilon$-close to the ideal
input, is greater than $1 - 2\epsilon$.
\end{proof}

It is now easy to see that the Proof of Theorem~\ref{t:robust} follows directly from Definition~\ref{def:robust} and Lemmas~\ref{l:soundness_corr} and~\ref{l:completeness}. Having the robustness property, the FK protocol can receive an input, which is $\epsilon$-close to its ideal value, from another protocol.
As we have shown, even if this input is correlated with an external system, we can still perform the verification as long as we have
$\epsilon$-closeness.

\section{Proof of Compositionality}\label{sect:composite}

To prove the security of the composite protocol (Theorem~\ref{t:composite}), we first need to prove that the FK protocol rejects with high probability a state close to a reflection about the $XY$-plane (Lemma~\ref{l:reflectedFK}). Then we prove that the modified state tomography protocol (Protocol~\ref{prot:modifiedstatetomography}), satisfies the $\epsilon$-closeness property required by the (robust) FK. This is achieved by showing Lemmas~\ref{l:modifiedstatetomography} and~\ref{l:closeness}. Finally we give the proof of Theorem~\ref{t:composite}.

\begin{lemma} \label{l:reflectedFK}
If the initial input state of the FK protocol is $\epsilon$-close to a reflection about the $X\!Y$-plane of the ideal input state,
the protocol will reject it with high probability.
\end{lemma}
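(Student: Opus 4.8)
The plan is to establish deterministic rejection in the exact case $\epsilon = 0$ and then promote it to the $\epsilon$-close case by a continuity argument in trace distance, mirroring the structure of Lemma~\ref{l:completeness}. First I would record the effect of the reflection about the $XY$-plane on the prepared qubits: the computation and trap qubits lie in the $XY$-plane and are therefore invariant, whereas each dummy qubit $\ket 0$ is mapped to $\ket 1$ and each $\ket 1$ to $\ket 0$. Consequently the reflected ideal input differs from the genuine ideal input only in that every dummy qubit has been flipped, and it suffices to show that a single global flip of the dummies is detected.

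Next I would exhibit a trap whose measurement fails with certainty under this flip. By Definition~\ref{d-dotted}, in $\widetilde{\mathcal K}_{3N}$ a trap placed at an original (primary) vertex is adjacent to exactly $N-1$ dotted vertices, all of which serve as dummies, so the verifier can pad the computation (replacing $N$ by $N+1$ when $N$ is odd) to guarantee a trap qubit $t$ with an odd number $k$ of dummy neighbours. In the honest execution each dummy neighbour in state $\ket 1$ contributes a $Z$ to $t$ through the entangling controlled-$Z$, and the verifier's expected outcome $r_t$ is computed from the intended dummy configuration. Flipping all dummies toggles the $Z$ acting on $t$ exactly $k$ times, so the net discrepancy from the verifier's expectation is $Z^{k} = Z$ since $k$ is odd; this sends $\ket{+_{\theta_t}} \mapsto \ket{-_{\theta_t}}$ and hence flips the reported trap outcome to $r_t \oplus 1$. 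The trap therefore fails deterministically and the verifier rejects with probability $1$ when the input is exactly the reflected ideal input.

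Finally I would extend this to the $\epsilon$-close case. The honest run of the protocol together with the verifier's acceptance test is a fixed two-outcome measurement applied to the input state (the protocol is a CPTP map and the acceptance projector is onto ``all traps correct''), so the acceptance probability is an affine functional of the input that is Lipschitz in trace distance. Writing $\sigma_{\mathrm{refl}}$ for the reflected ideal input and $\rho$ for the actual input with $\|\rho - \sigma_{\mathrm{refl}}\|_{Tr} \leq \epsilon$, this gives $|\Pr_\rho[\mathrm{accept}] - \Pr_{\sigma_{\mathrm{refl}}}[\mathrm{accept}]| \leq \epsilon$; combined with $\Pr_{\sigma_{\mathrm{refl}}}[\mathrm{accept}] = 0$ from the exact case, we obtain $\Pr_\rho[\mathrm{accept}] \leq \epsilon$, so the verifier rejects with probability at least $1 - \epsilon$.

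The hard part will be the exact-case bookkeeping: carefully propagating the global dummy flip through the controlled-$Z$ entangling layer to confirm that its only net effect on the chosen trap is a single $Z$ relative to the verifier's expectation, and verifying that the padding construction genuinely produces a trap of odd dummy-degree. Once that is settled the continuity step is routine, being essentially the same non-expansiveness of CPTP maps used in Lemma~\ref{l:completeness}.
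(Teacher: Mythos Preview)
Your exact-case argument (dummies flip, a trap with an odd number of dummy neighbours picks up a net $Z$ and fails deterministically, padding ensures such a trap exists) is the same as the paper's.

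The difference is in the continuity step. You treat the honest run of FK as a fixed CPTP map and apply trace-distance contractivity, exactly as in Lemma~\ref{l:completeness}, to conclude $\Pr_\rho[\text{accept}]\leq\epsilon$. The paper instead invokes Lemma~\ref{l-strongrobust} and obtains only $\Pr[\text{reject}]\geq 1-O(\sqrt\epsilon)$. Your route is cleaner and gives the sharper constant, but it buys that sharpness by \emph{fixing} the prover's action: the phrase ``the honest run of the protocol \ldots\ is a fixed two-outcome measurement'' is doing real work, and it is precisely the assumption the paper avoids. In the composite protocol the reflected input arises only because the provers chose $-Z$ over $Z$ during state tomography, so the relevant prover is adversarial and his subsequent FK strategy may be correlated with the (reflected, deviated) input he holds. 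Lemma~\ref{l-strongrobust} is the tool that controls the effect of such correlations, at the cost of the $O(\sqrt\epsilon)$ loss; your CPTP-contractivity step does not cover that case.

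So your proof is correct for the statement read as ``honest prover on an $\epsilon$-close reflected input,'' and the deterministic core is exactly what the paper does. For the use this lemma is put to in Lemma~\ref{l:modifiedstatetomography} and Theorem~\ref{t:composite}, however, you would still need to upgrade the last paragraph to the correlated/adversarial setting, which is why the paper routes through Lemma~\ref{l-strongrobust} rather than Lemma~\ref{l:completeness}.
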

\begin{proof}
First we note that the input to the FK protocol consists of $X\!Y$-plane states and dummy qubits which are either $\ket{0}$ or $\ket{1}$.
The $X\!Y$-plane states are invariant under the reflection, while the dummy states will be flipped. Assume that there is a trap that has an odd number of (dummy) neighbours. The verifier knows that he sent the state $\ket{+_\theta}$ and expects to make a $Z$ correction if the number of $\ket{1}$ neighbours is odd. However, if instead of what the verifier expects, there is an overall reflection with respect to the $X\!Y$-plane, then for each of the neighbours of the trap there will be a new $Z$ correction (the $\ket{0}$ will become $\ket{1}$ inducing a $Z$, while the $\ket{1}$ will become $\ket{0}$ undoing the previous $Z$ correction, which is equivalent with another $Z$ correction since $Z^2=\mathbf{1}$). Therefore, if the neighbours of a trap are odd in number, he will expect the exact opposite result and will deterministically detect the deviation. For this to happen it suffices that the verifier makes sure that at least one trap has odd number of neighbours, something that can be easily achieved.
Therefore, the FK protocol will always reject the reflected ideal input state.
Given that the input is $\epsilon$-close to this, we have shown in Lemma~\ref{l:soundness_corr} that the outcome density operator changes by at most $O(\sqrt{\epsilon})$ from its ideal value. Thus, the output state is $O(\sqrt{\epsilon})$ close to the reflected ideal input state. It follows that the protocol will reject this state with at most probability $1 - O(\sqrt{\epsilon})$.
\end{proof}

In proving the correctness of our protocol we first need to show the correctness of the modified state tomography protocol. Here we focus on the main results that we use for showing the correctness and security of this protocol.
We start with a theorem from \cite{ruv}:

\begin{theorem} \label{t:statetomography}
\cite{ruv}
Fix $\xzdeterminedset = \{ \pi^1, \ldots, \pi^{2^q} \}$ a complete, orthonormal set of $q$-qubit $X\!Z$-determined pure states.  For a sufficiently large constant~$\alpha$ and for sufficiently large~$n$, let $m = m(n) \geq q n$ and $N \geq m^{\alpha - 1}$.  Let $\sigma \in [m]^{q n}$ be a list of distinct indices.  Consider a combination of the following two protocols between the verifier, Eve, and the provers, Alice and Bob:
\begin{enumerate}
\item
CHSH games: In the first protocol, Eve referees $N m$ sequential CHSH games.  She accepts if
\begin{equation}
\bigabs{\{ j \in [N m] : A_j B_j = X_j \oplus Y_j \}} \geq \cos^2(\pi/8) N m - \tfrac{1}{2 \sqrt 2} \sqrt{N m \log(N m)}
 \enspace .
\end{equation}
\item
State tomography: In the second protocol, Eve chooses $K \in [N]$ uniformly at random.  She referees $(K-1) m$ CHSH games.  For the $K$th set, she referees a state tomography protocol with parameters~$q$, $n$, $m$, $\xzdeterminedset$ and $\sigma$.  She accepts if the following criteria are satisfied:

\begin{subequations} \label{e:tomographyacceptancecriteria}
\begin{align}
\max_{o \in [2^q]} \bigabs{\# \{ j : O_j = o \} - n/2^q} &\leq 4^q \sqrt{n \log n} \label{e:tomographyacceptanceenoughstatisticsperoutcome}
\\
\max_{o \in [2^q], P \in \{I, X, Z\}^{\otimes q}} \abs{\tau^{o, P} - \Tr (\pi^o P)} &\leq 4^q \sqrt{(\log n)/n} \label{e:tomographyacceptancecriterionestimator}
 \enspace .
 \end{align}\end{subequations}

\end{enumerate}

The combined protocol satisfies the following completeness and soundness conditions:
\begin{description}
\item[Completeness:]
If Alice and Bob use $N m$ shared EPR states to play the CHSH games according to an ideal strategy, and if Bob uses an ideal strategy with respect to the projections~$\xzdeterminedset$ on the~$K$th set of $m$ EPR states in the state tomography protocol, then in both protocols,
\begin{equation}
\Pr[\text{Eve accepts}] \geq 1 - O(n^{-1/2})
 \enspace .
\end{equation}
\item[Soundness:]
Assume that for both protocols, $\Pr[\text{Eve accepts}] \geq 1 - n^{-1/3}$.  Let $\rho$ be Alice's state in the second protocol after $(K-1) m$ games and conditioned on Bob's messages \\ $O_1, \ldots, O_n$.  Then there exists an isometry $\XA : \H_A \hookrightarrow (\C^2)^{\otimes m} \otimes \H_A'$ such that letting $\rho_{\sigma, j}$ be $\XA \rho \XA{}^\dagger$ reduced to Alice's qubits $\{ \sigma(j, i) : i \in [q] \}$,
\begin{equation}
\Pr\!\Big[
\bigabs{ \big\{ j \in [n] : \Tr (\rho_{\sigma, j} \pi^{O_j}) \geq 1 - O(n^{-1/16})
\big\} } \geq \big(1 - O(n^{-1/16})\big)n
\Big] \geq 1 - 4 n^{-1/12}
 \enspace .
\end{equation}
Here, the probability is over $K$, the first $(K-1) m$ games and $O_1, \ldots, O_n$.

\end{description}
\end{theorem}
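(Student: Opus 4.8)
The plan is to prove completeness and soundness separately, with the soundness direction resting on the rigidity of sequential CHSH games. For completeness I would start from the ideal strategy, in which Alice and Bob share $Nm$ EPR pairs and play CHSH with the optimal quantum observables, winning each game with probability $\cos^2(\pi/8)$. Since the games are sequential and the provers may be adaptive, the running count of wins is governed by a supermartingale, so an Azuma--Hoeffding bound shows the win fraction concentrates around $\cos^2(\pi/8)$; the slack $\tfrac{1}{2\sqrt2}\sqrt{Nm\log(Nm)}$ in the threshold is exactly what absorbs these fluctuations, except with probability $O(n^{-1/2})$. For the tomography criteria, the honest strategy makes each outcome $o$ appear with probability $2^{-q}$ and makes each estimator $\tau^{o,P}$ an empirical average of $\Tr(\pi^o P)$; the slacks $4^q\sqrt{n\log n}$ and $4^q\sqrt{(\log n)/n}$ dominate the sampling error with high probability. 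A union bound over the $O(4^q)$ criteria then yields $\Pr[\text{accept}]\geq 1-O(n^{-1/2})$.

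For soundness, the central tool is the rigidity of CHSH games: near-optimal winning forces the provers' strategy to be close, under a local isometry $\XA$ on Alice's side, to the ideal one in which the shared state is a collection of EPR pairs and Alice's observables are the Pauli $X$ and $Z$ operators. I would first invoke the sequential version of this rigidity, quantifying how the hypothesis $\Pr[\text{accept}]\geq 1-n^{-1/3}$ turns into a polynomially small bound on the operator distance between Alice's actual observables and the ideal Paulis on a typical block of $m$ qubits. The decisive point is that the verifier interleaves the CHSH test rounds and the tomography round uniformly at random, so the provers cannot tell them apart; hence the structure extracted from the test rounds must also hold on the $K$th (tomography) set. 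This indistinguishability---the \emph{blindness} of the round type---is what licenses transporting the rigidity guarantee into the round where the actual state preparation takes place.

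Given this near-Pauli structure, I would argue that in the tomography round Bob's measurement collapses Alice's qubits to states $\rho_{\sigma,j}$ whose $\{I,X,Z\}$-expectations are precisely what the accepted statistics $\tau^{o,P}$ estimate. Criterion~(\ref{e:tomographyacceptancecriterionestimator}) then forces $\abs{\Tr(\rho_{\sigma,j}P)-\Tr(\pi^{O_j}P)}$ to be small for every $P\in\{I,X,Z\}^{\otimes q}$. Since $\pi^{O_j}$ lies in the $X\!Z$-determined family $\xzdeterminedset$, it is uniquely fixed by exactly these expectation values, so a quantitative, robust version of the $X\!Z$-determined property upgrades closeness of expectations to fidelity closeness, $\Tr(\rho_{\sigma,j}\pi^{O_j})\geq 1-O(n^{-1/16})$. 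I would then combine the per-block estimation errors with Hoeffding concentration over the $n$ repetitions and a union bound over the random choice of $K$ and over the outcomes to reach the stated probability $1-4n^{-1/12}$.

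The hardest part will be the soundness direction, specifically the transfer of the rigidity guarantee from the CHSH test rounds to the single tomography round under sequential composition. Controlling how the isometry and the operator-norm errors propagate through $Nm$ sequential games---so that a strong guarantee holds simultaneously on a randomly chosen block without the errors accumulating badly---is the technical heart of the argument, and it is what dictates the precise exponents ($n^{-1/16}$, $n^{-1/12}$) in the statement. The robust passage from approximate $\{I,X,Z\}$-expectations to a fidelity bound for $X\!Z$-determined states is a secondary but still delicate ingredient.
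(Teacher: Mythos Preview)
This theorem is not proved in the paper at all: it is quoted verbatim from \cite{ruv} (note the citation immediately after the theorem header) and used as a black box. The paper offers no proof to compare against; it simply invokes the statement to derive Corollary~\ref{c:cor1} and Lemmas~\ref{l:modifiedstatetomography} and~\ref{l:closeness}. So there is no ``paper's own proof'' here, and the exercise of comparing your proposal to it is moot.

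That said, your sketch is a reasonable high-level outline of how the result is actually established in \cite{ruv}: completeness via Azuma--Hoeffding concentration for the ideal strategy, soundness via sequential CHSH rigidity transferred to a random block by indistinguishability of round types, followed by the robust $X\!Z$-determined argument to convert approximate Pauli expectations into fidelity bounds. You have correctly identified the technical heart (controlling error propagation through sequential rigidity so that the isometry guarantee holds on a randomly chosen block) and the source of the specific exponents. If your goal were to reconstruct the RUV proof, this would be the right skeleton; but for the purposes of the present paper, the theorem is simply imported.
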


We give the following corollary to this theorem:

\begin{corollary} \label{c:cor1}
By changing the measurement operators accordingly, a state tomography protocol for $q$-qubit $X\!Y$-determined ($Y\!Z$-determined) states exists,
and achieves the same completeness and soundness bound as the one from Theorem~\ref{t:statetomography}.
\end{corollary}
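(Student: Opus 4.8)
The plan is to reduce $XY$- and $YZ$-determined tomography to the $XZ$ case of Theorem~\ref{t:statetomography} by a fixed single-qubit basis rotation, exploiting the fact that both CHSH rigidity and trace distance are invariant under local unitaries. First I would record the elementary fact that there is a single-qubit unitary $U$ (a Clifford rotation of the Bloch sphere) mapping the ordered pair of anticommuting observables $(X,Z)$ to $(X,Y)$, and likewise one mapping $(X,Z)$ to $(Y,Z)$; existence follows because any two pairs of traceless, anticommuting $\pm 1$ observables are unitarily equivalent. Correspondingly, $U$ rotates the two measurement directions of the $XZ$ CHSH game (which lie in the $XZ$-plane of the Bloch sphere) into two directions in the $XY$-plane (resp. $YZ$-plane); these are exactly the settings drawn from the set $S$ in Protocol~\ref{prot:modifiedstatetomography} that define the $XY$ and $YZ$ CHSH games.

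Next I would verify that this rotation leaves every quantity in Theorem~\ref{t:statetomography} invariant. The CHSH value depends only on the relative angles between the four measurement directions, so the rotated game is still a CHSH game with the same optimal value $\cos^2(\pi/8)$ and the same acceptance threshold; hence the rigidity theorem of \cite{ruv} applies verbatim, producing the same isometry $\XA$ up to conjugation by a tensor power of $U$. For a fixed $XY$-determined target $\rho$, set $\tilde\rho = (U^\dagger)^{\otimes q}\rho\, U^{\otimes q}$, which is $XZ$-determined since $\Tr(\tilde\rho\,P) = \Tr(\rho\, U^{\otimes q} P (U^\dagger)^{\otimes q})$ runs over $\{I,X,Y\}^{\otimes q}$ as $P$ runs over $\{I,X,Z\}^{\otimes q}$. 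Running the rotated protocol to prepare $\rho$ is then identical, operator-for-operator, to running the original protocol to prepare $\tilde\rho$, with each Pauli $P\in\{I,X,Z\}^{\otimes q}$ replaced by $U^{\otimes q} P (U^\dagger)^{\otimes q}\in\{I,X,Y\}^{\otimes q}$. Thus the verifier's estimators now approximate $\Tr(\rho\,P)$ for $P\in\{I,X,Y\}^{\otimes q}$, and the acceptance criteria \eqref{e:tomographyacceptancecriteria} become the corresponding criteria with $\{I,X,Z\}$ replaced by $\{I,X,Y\}$ (resp. $\{I,Y,Z\}$).

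Since trace distance is invariant under conjugation by a local unitary, the completeness bound $1-O(n^{-1/2})$ and the soundness conclusion $\Tr(\rho_{\sigma,j}\,\pi^{O_j})\geq 1 - O(n^{-1/16})$ transfer with the same probabilities. Finally I would invoke Theorem~\ref{thm:stabilizer} (equivalently, the $XY$-determined property already established for the relevant stabilizer states) to conclude that estimating traces against $\{I,X,Y\}^{\otimes q}$ determines each $XY$-determined $\pi^o$ uniquely, exactly as traces against $\{I,X,Z\}^{\otimes q}$ determine each $XZ$-determined state, so correctness is preserved. The main obstacle is the second step: one must be confident that the RUV rigidity analysis is genuinely invariant under the relabelling of which anticommuting pair plays the roles of ``$X$'' and ``$Z$,'' and that this relabelling does not interact with the extraction of the isometry or the quantitative error bounds. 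This reduces to the observation that rigidity constrains the measurements only through their anticommutation relations and the near-optimal CHSH value, both preserved by $U$, so none of the quantitative estimates need to be reproven.
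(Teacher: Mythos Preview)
Your proposal is correct and gives a clean reduction, but it takes a somewhat different route from the paper. The paper's proof does not argue via a single-qubit unitary relabeling; instead it invokes the discussion from Section~\ref{sect:main-comp}, where the Bell state $\ket{\Phi^+}$ is replaced by $\ket{\Psi^+}$ (which is $XY$-determined by Theorem~\ref{thm:stabilizer}) and the extended CHSH game comprising six game types is used to rigidly certify tensor copies of $\ket{\Psi^+}$. In that setup the residual sign ambiguity becomes an $XY$-plane reflection rather than complex conjugation, and the $XZ$ tomography argument then carries over to $XY$ (and analogously $YZ$). Your argument instead keeps the Bell state fixed and conjugates all observables by a fixed Clifford $U$, observing that the CHSH value, the rigidity conclusion, and trace distance are invariant under such local conjugation. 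Both approaches exploit the same underlying symmetry---that the RUV analysis depends only on the anticommutation structure of the two certified observables, not on their names---but your version is more self-contained for the corollary as stated, whereas the paper's framing via $\ket{\Psi^+}$ and the six-game CHSH is what is actually deployed downstream in Protocol~\ref{prot:modifiedstatetomography}, where all three planes must be certified within a single rigidity statement and the residual reflection must be the one handled by Lemma~\ref{l:reflectedFK}.
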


\begin{proof}
As mentioned, if we consider the extended CHSH game (comprising of $6$ CHSH games) and try to rigidly determine the existence of a tensor
product of $\ket{\Psi^+}$ states, we can fix the strategies of the provers up to an $X\!Y$ plane reflection. In particular, the results of
an $X\!Z$ state tomography in the original setting hold here for $X\!Y$ ($Y\!Z$) tomography. Therefore, it is possible to certify the preparation of $q$-qubit $X\!Y$-determined ($Y\!Z$-determined) states.
\end{proof}

\noindent{We can now present the main lemma proving that Protocol~\ref{prot:modifiedstatetomography} is a verification protocol.}

\begin{lemma} \label{l:modifiedstatetomography}
Protocol~\ref{prot:modifiedstatetomography} has completeness lower bounded by $1 - O(n^{-1/2})$ and soundness upper bounded by $O(n^{-1/12})$.
\end{lemma}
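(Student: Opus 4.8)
The plan is to reduce the claim directly to Theorem~\ref{t:statetomography} together with its adaptation in Corollary~\ref{c:cor1}, the only new ingredient being that Protocol~\ref{prot:modifiedstatetomography} runs six types of CHSH games (two each of $X\!Z$, $X\!Y$, $Y\!Z$) in place of a single type.

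For completeness, I would argue that when the provers behave honestly---sharing copies of $\ket{\Psi^+}$ and using the ideal measurement strategies associated with the directions in $S$---each of the six game types independently satisfies the completeness guarantee of Theorem~\ref{t:statetomography}, transferred to the $X\!Y$- and $Y\!Z$-determined settings by Corollary~\ref{c:cor1}. Both subprotocols (CHSH testing and state tomography) thus accept with probability at least $1 - O(n^{-1/2})$ for each type, and since the number of types is a fixed constant, a union bound over the at most six rejection events preserves the $1 - O(n^{-1/2})$ bound.

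For soundness, the key input is the rigidity of the extended CHSH game. As explained in the discussion preceding the protocol, the six CHSH games rigidly certify that the provers share copies of $\ket{\Psi^+}$ and that their measurements realise the $S$-directions up to a single global $X\!Y$-plane reflection. Conditioning on the verifier accepting with probability at least $1 - n^{-1/3}$, I would then apply the soundness conclusion of Theorem~\ref{t:statetomography}---invoking Corollary~\ref{c:cor1} for the $X\!Y$- and $Y\!Z$-determined resource states---to deduce that, except with probability $O(n^{-1/12})$, prover~$1$'s collapsed state is close in trace distance to the intended tensor product of resource states. The allowed $X\!Y$-reflection is harmless here, since the $X\!Y$-plane resource states are invariant under it. The RUV soundness bound $4 n^{-1/12}$ therefore carries over, the constant being absorbed into the $O(\cdot)$ notation.

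The main obstacle I anticipate is verifying that the rigidity analysis genuinely goes through for $\ket{\Psi^+}$ with the specific six-game structure induced by $S$, and that reconstructing each resource state from the three planes of games still pins it down uniquely without degrading the $O(n^{-1/12})$ soundness. Because each of these steps involves only a constant number of game types and measurement directions, the associated union bounds contribute constant factors that leave the stated asymptotic orders unchanged; the substantive content is inherited wholesale from Theorem~\ref{t:statetomography} and Corollary~\ref{c:cor1}.
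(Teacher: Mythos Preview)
Your proposal follows essentially the same route as the paper: invoke Theorem~\ref{t:statetomography} and Corollary~\ref{c:cor1} for each of the six CHSH/tomography types, then combine by a union bound over this constant number of subprotocols (the paper writes completeness as $(1-O(n^{-1/2}))^6 = 1-O(n^{-1/2})$ and soundness as $6\cdot 4n^{-1/12} = 24n^{-1/12}$, stressing that the six tomography subprotocols are independent and non-adaptive).

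One point to correct: your assertion that the $X\!Y$-plane reflection is ``harmless here, since the $X\!Y$-plane resource states are invariant under it'' is incomplete. Protocol~\ref{prot:modifiedstatetomography} is also used to prepare the computational-basis dummy qubits $\ket{0},\ket{1}$ (via the $X\!Z$ tomography subprotocols), and these are \emph{not} invariant under an $X\!Y$-reflection---they get swapped. The paper does not resolve this within the lemma either; it explicitly sets the reflection case aside by appealing to Lemma~\ref{l:reflectedFK}, i.e.\ the reflection ambiguity is deferred to the FK stage of the composite protocol, where it is detected via traps with an odd number of dummy neighbours. You should invoke the same mechanism rather than claim invariance.
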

\begin{proof}
According to corollary~\ref{c:cor1}, the six state tomography protocols that constitute
Protocol~\ref{prot:modifiedstatetomography} are
valid verification protocols achieving the same bounds for completeness and soundness as the original protocol from Theorem~\ref{t:statetomography}.
We will ignore the case of $X\!Y$ plane reflections since, as we have shown in Lemma~\ref{l:reflectedFK}, these are detected with overwhelming probability by the FK protocol.
These protocols can be ``stitched'' together in the same way the subprotocols of the RUV protocol are stitched together. In fact, our case requires
a much simpler analysis since the six state tomography protocols are independent of each other. This means that in each subprotocol, the verifier
is not basing his questions on the results of any previous subprotocol. This nonadaptive technique contrasts the RUV protocol in which the questions were adaptive.
In the case of honest provers, the verifier accepts if all subprotocols succeed. For each one, we know
from Theorem~\ref{t:statetomography}
that the probability of acceptance is
$\geq 1 - O(n^{-1/2})$, hence for the whole protocol the probability of acceptance is $\geq (1 - O(n^{-1/2}))^6 = 1 - O(n^{-1/2})$. Thus, we see
that the completeness bound remains unchanged.
For soundness, assuming the provers are dishonest we know, again from Theorem~\ref{t:statetomography}, that the probability of accepting an
incorrect outcome is $\leq 4n^{-1/12}$. In our protocol, the provers can be dishonest in any of the six subprotocols, therefore, by a union bound
the probability of accepting an incorrect outcome is $\leq 6 \cdot 4n^{-1/12} = 24n^{-1/12}$. Therefore, we can say that the soundness of our protocol
is upper bounded by $O(n^{-1/12})$.
\end{proof}

We are now able to give the proof of our main result (Theorem~\ref{t:composite}) which concerns the properties of the composite protocol (Protocol~\ref{prot:composite}). To do this, we require an additional property.
\begin{lemma} \label{l:closeness}
Assume the verifier
wants to prepare a state $\rho$ consisting of tensor products of qubits which are all determined in either the $X\!Z$, $X\!Y$ or $Y\!Z$ bases.
A successful run of Protocol~\ref{prot:modifiedstatetomography} certifies that, 
prover $1$ has a state $\rho'$ such that
$\rho$ and $\rho'$ are close in trace distance.
\end{lemma}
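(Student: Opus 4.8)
The plan is to reduce the joint trace-distance claim to the per-qubit fidelity guarantees that RUV state tomography supplies, and then lift those back up to a statement about the full product state. First I would observe that, because $\rho = \bigotimes_i \pi^i$ is a tensor product of single-qubit states each determined in the $X\!Z$, $X\!Y$ or $Y\!Z$ basis (the dummy qubits $\ket 0,\ket 1$ are $Z$-eigenstates and hence $X\!Z$-determined, while the $\ket{+_\theta}$ states are $X\!Y$-determined), each factor $\pi^i$ is an axis-determined stabilizer-type state and can therefore be certified by one of the six tomography subprotocols of Protocol~\ref{prot:modifiedstatetomography}. I would group the target qubits according to which of the three planes determines them and invoke Corollary~\ref{c:cor1}, which extends Theorem~\ref{t:statetomography} from the $X\!Z$ case to the $X\!Y$ and $Y\!Z$ cases with identical completeness and soundness; the spurious $X\!Y$ reflection is already discarded via Lemma~\ref{l:reflectedFK}.

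Next I would apply Theorem~\ref{t:statetomography} to each of these runs. A successful run means Eve accepts with probability $\geq 1 - n^{-1/3}$, so the soundness clause applies: there is an isometry $\XA$ such that, with probability $\geq 1 - 4 n^{-1/12}$, all but an $O(n^{-1/16})$ fraction of the reduced single-qubit states $\rho'_i$ of $\XA \rho' \XA^{\dagger}$ satisfy $\Tr(\pi^i \rho'_i) \geq 1 - O(n^{-1/16})$. Since each $\pi^i$ is pure, the relationship between fidelity and trace distance already used in Lemma~\ref{l-strongrobust} turns each such inequality into $\| \rho'_i - \pi^i \|_{Tr} \leq O(n^{-1/32})$, giving closeness qubit by qubit.

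The step from per-qubit closeness to closeness of the full state is where the real work lies, since $\rho'$ need not be a product and control of the marginals does not, by itself, control the joint state. I would handle this with the commuting projectors $P_i = \pi^i \otimes \mathbb{I}$ and the operator union bound $\mathbb{I} - \bigotimes_i \pi^i \leq \sum_i (\mathbb{I} - P_i)$, which yields $\Tr\big( \rho\, \rho' \big) \geq 1 - \sum_i \big(1 - \Tr(\pi^i \rho'_i)\big)$; because $\rho = \bigotimes_i \pi^i$ is pure, the fidelity--trace-distance inequality then gives the bound I would record as Equation~\ref{eqn:closeness},
\begin{equation} \label{eqn:closeness}
\| \rho' - \rho \|_{Tr} \leq \sqrt{ \sum_i \big( 1 - \Tr(\pi^i \rho'_i) \big) }.
\end{equation}
Corollary~\ref{c:gentlemeasurement} is then the tool that carries this single-system estimate into the correlated setting $\rho_{AB}$ needed by the strong robustness of Lemma~\ref{l-strongrobust}.

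The main obstacle, and the point I would be most careful about, is controlling the sum inside Equation~\ref{eqn:closeness}: the $O(n^{-1/16})$ fraction of positions on which tomography gives no guarantee each contribute up to $1$, and even the certified positions contribute $O(n^{-1/16})$ apiece, so a naive bound over all $n$ groups diverges and would be consistent with the accumulation phenomenon of Lemma~\ref{l:ft1}. The resolution I would exploit is that Protocol~\ref{prot:modifiedstatetomography} is run with a tomography parameter far larger than the number of qubits actually drawn for the FK input, and that the random index assignment $\sigma$ places the selected qubits on certified positions, so that with high probability none of them fall in the bad fraction and the residual sum---hence $\epsilon = \| \rho' - \rho \|_{Tr}$---vanishes as $n \to \infty$. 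Threading these parameter choices through Equation~\ref{eqn:closeness} is exactly what supplies the $\epsilon$ that Theorem~\ref{t-robust} then consumes in the proof of Theorem~\ref{t:composite}.
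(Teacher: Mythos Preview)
Your argument is correct in outline but takes a genuinely different route from the paper. The paper does not reconstruct the passage from per-qubit fidelity to joint trace distance; instead it simply invokes a result already proved in~\cite{ruv}, namely that the soundness clause of Theorem~\ref{t:statetomography} directly yields, with probability at least $1-O(n^{-1/48})$, the joint bound
\[
\bigl\| \rho_{S}(O_{1,n}) - \bigotimes_{j\in S}\pi^{O_j} \bigr\|_{Tr} \leq O(n^{-1/64})
\]
on a uniformly random subset $S$ of size $O(n^{1/64})$; this is what the paper labels Equation~\eqref{eqn:closeness}. The paper then absorbs the residual failure probability $p=O(n^{-1/48})$ by writing the actual state as a mixture and bounding the trace distance via the triangle inequality. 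In other words, the paper treats the step you identify as ``the real work'' as a black box.

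Your approach, by contrast, re-derives that step from scratch using the operator union bound $\mathbb{I}-\bigotimes_i\pi^i\leq\sum_i(\mathbb{I}-P_i)$ and the pure-state fidelity--trace-distance inequality. This is sound and more self-contained, but it forces you to manage the accumulation problem explicitly (bad positions contributing up to~$1$, good positions contributing $O(n^{-1/16})$), which you correctly resolve by restricting to a random subset small enough that no bad index is hit with high probability---exactly the same parameter regime $|S|=O(n^{1/64})\ll n^{1/16}$ that the cited \cite{ruv} result uses internally. So the two routes converge, with the paper's being shorter by citation and yours being more transparent about where the polynomial losses come from.

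Two minor points. First, the equation you label \eqref{eqn:closeness} is not the paper's Equation~\eqref{eqn:closeness}; the paper's is the quoted $O(n^{-1/64})$ bound above, not your union-bound inequality. Second, your closing invocation of Corollary~\ref{c:gentlemeasurement} does not belong in this lemma: that corollary is used inside the robustness proof (Lemma~\ref{l-strongrobust}) to control correlations with the adversary's side system, whereas Lemma~\ref{l:closeness} is purely a statement about prover~1's reduced state and the paper's proof does not touch the gentle-measurement machinery.
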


\begin{proof} The proof is partially given in \cite{ruv}. In the state tomography protocol, a prover prepares multiple copies of
a resource state. In \cite{ruv} it is stated that if the verifier accepts, then, with high probability,
a subset of states of the prover are close in trace distance
to copies of the resource state.

The soundness condition of Theorem~\ref{t:statetomography} states that:
\begin{equation}
\Pr\!\Big[
\bigabs{ \big\{ j \in [n] : \Tr (\rho_{\sigma, j} \pi^{O_j}) \geq 1 - O(n^{-1/16})
\big\} } \geq \big(1 - O(n^{-1/16})\big)n
\Big] \geq 1 - 4 n^{-1/12}
 \enspace .
\end{equation}
It is shown in \cite{ruv} that this condition translates to the fact that with probability at least $1 - O(n^{-1/48})$ we have:
\begin{equation} \label{eqn:closeness}
\| \rho_{S}(O_{1,n}) - \otimes_{j \in S} \pi^{O_j} \|_{Tr} \leq O(n^{-1/64})
\end{equation}
Where $S$ is uniformly random subset of size $O(n^{1/64})$.
If we denote $O(n^{-1/64})$ as $\epsilon$, $O(n^{-1/48})$ as $p$, $\rho_{S}(O_{1,n})$ as $\rho_{\epsilon}$ and $\otimes_{j \in S} \pi^{O_j}$ as $\rho_{id}$ then the state $\rho'$, that prover $1$ has, is:
\begin{equation}
\rho' = (1 - p) \rho_{\epsilon} + p (I - \rho_{\epsilon})
\end{equation}
We can see that, for sufficiently large $n$, the values of $p$ and $\epsilon$ tend to 
$0$. Consequently, $\rho'$ approaches $\rho_{\epsilon}$ and $\rho_{\epsilon}$ approaches the ideal state, $\rho_{id}$. 
Computing the trace distance between $\rho'$ and $\rho_{id}$, we obtain: 
\begin{equation}
\| \rho' - \rho_{id} \|_{Tr} = \| (1 - p) \rho_{\epsilon} + p (I - \rho_{\epsilon}) - \rho_{id} \|_{Tr} \leq
\| \rho_{\epsilon} - \rho_{id} \|_{Tr} + \| p (I - 2\rho_{\epsilon}) \|_{Tr} 
\end{equation}
And using inequality~\ref{eqn:closeness}, we have:
\begin{equation}
\| \rho' - \rho_{id} \|_{Tr} \leq O(n^{-1/64}) + 2p = O(n^{-1/64}) + O(n^{-1/48}) = O(n^{-1/64})
\end{equation}
Therefore, the state that prover $1$ has, conditioned on his messages $O_{1,n}$, is close to the state comprised of copies of the resource states.
Depending on which type of state tomography is done, the resource states are determined in either the $XZ$, $XY$ or $YZ$ bases.
\end{proof}

\begin{proof}[Proof of \emph{\textbf{Theorem~\ref{t:composite}}}]
According to Lemmas~\ref{l:modifiedstatetomography} and~\ref{l:closeness}, Protocol~\ref{prot:modifiedstatetomography} is capable of preparing
with high probability, a multi-qubit state $\rho$ on prover $1$'s side,
such that $\rho$ is $\epsilon$-close to a tensor product of states determined
in either the $XZ$, $XY$ or $YZ$ bases. In fact, each subprotocol is capable of such a preparation.
For the two $XY$ state tomography protocols we choose the resource state to be:
\begin{equation}
\Ket{+} \otimes \Ket{+_{\pi/4}} \otimes \Ket{+_{2\pi/4}} \otimes \Ket{+_{3\pi/4}} \otimes \Ket{+_{4\pi/4}} \otimes
\Ket{+_{5\pi/4}} \otimes \Ket{+_{6\pi/4}} \otimes \Ket{+_{7\pi/4}}
\end{equation}
For the two $XZ$ state tomography protocols we choose the resource state to be:
\begin{equation}
\Ket{0} \otimes \Ket{1} \otimes \Ket{0} \otimes \Ket{1}
\end{equation}
This allows us to prepare multi-qubit states on prover $1$'s side which are close in trace distance to the FK input consisting of $XY$-plane states
and dummy qubits (the $\Ket{0}$, $\Ket{1}$ qubits). If we denote as $\rho_1$ the multi-qubit state consisting of multiple copies of the $XY$ resource state
and $\rho_2$ as the multi-qubit state consisting of multiple copies of the $XZ$ resource state, then the FK input is effectively $\rho_1 \otimes \rho_2$.
Lemma~\ref{l:closeness} shows that with high probability prover $1$
will have a state $\rho_1'$ that is $\epsilon_{prep}$-close to $\rho_1$ and a state
$\rho_2'$ that is $\epsilon_{prep}$-close to $\rho_2$, where $\epsilon_{prep} = O(n^{-1/64})$.
Therefore, $\rho_1' \otimes \rho_2'$ is $2\epsilon_{prep}$-close to $\rho_1 \otimes \rho_2$.
Moreover, in \cite{ruv} it is proven in the state tomography protocol prover $1$ is completely blind regarding his state.
Given this, and using Theorem~\ref{t:robust}, we can compose the modified state tomography protocol (Protocol~\ref{prot:modifiedstatetomography})
with the FK protocol to achieve a new blind verification protocol. The state $\rho_1' \otimes \rho_2'$ is used as input for the FK protocol, since
it is $\epsilon$-close to the ideal input, where $\epsilon = 2 \epsilon_{prep}$. \\
The bound on completeness for the new protocol can be computed from the completeness bounds of the constituent protocols. In the honest provers setting,
the verifier's acceptance probability for modified state tomography is $1 - O(n^{-1/2})$, and for FK with deviated input it is
$1 - O(\sqrt{\epsilon}) = 1 - O(n^{-1/128})$.
Multiplying these together and taking the leading order terms we find that completeness of the protocol is upper bounded by
$1 - O(n^{-1/128})$. \\
For soundness, in the dishonest setting if the verifier
would reject in either modified state tomography or FK then he would reject in the new protocol as
well. The bound on soundness for modified state tomography is $O(n^{-1/12})$ and for FK is
$\left( \frac{2}{3} \right)^{\lceil \frac{2d}{5} \rceil}$, where $d$ is the security parameter of the FK protocol that specifies
the size of the encoding for the computation graph.
From a union bound we get that the soundness for our composite approach is $\left( \frac{2}{3} \right)^{\lceil \frac{2d}{5} \rceil} +
O(n^{-1/12})$.

The last part of the proof deals with the round complexity of our composite approach. In the previous proof of Lemma~\ref{l:closeness} we mentioned that prover $1$'s state restricted to subset of
$O(n^{1/64})$ qubits is close in trace distance
to the ideal state. However we need $n$ to be sufficiently large so that this subset of qubits can encompass the entire FK input.
We know that the FK input comprises of $O(|\mathcal{C}|^2)$ qubits, where $\mathcal{C}$ is the computation the verifier wants to perform.
This means, that we need $O(|\mathcal{C}|^{128})$ qubits in total so that we can claim
that a state of $O(|\mathcal{C}|^2)$ qubits is close to its intended value.
Recall that from Thereom~\ref{t:statetomography},
the number of rounds for state tomography is $O(n^{\alpha})$, where we know from \cite{ruv} that $\alpha > 16$.
This means that the total number of rounds must be $O(|\mathcal{C}|^c)$, where $c > 128 \cdot 16 = 2048$.
If we relabel $n$ to be $|\mathcal{C}|$ then the round complexity is $O(n^c)$.
\end{proof}

\section{Proof of Fault Tolerance} \label{sect:ft}

The main result of this section is the proof of Theorem~\ref{t:ft5} that gives a fault tolerant FK protocol. We first prove Lemmas~\ref{l:ft1},~\ref{l:ft2},~\ref{l:ft3} and~\ref{l:ft4} that as stressed in Section~\ref{sect:main-ft}, highlights why we cannot use results similar to the robustness and why the simplest approaches fail. Then we proceed in the proof of Theorem~\ref{t:ft5}.

\begin{proof}[Proof of \emph{\textbf{Lemma~\ref{l:ft1}}}]
We can compute a bound on the trace distance between an arbitrary qubit $\rho_i$ and $\mathcal{E}(\rho_i)$:
\begin{equation}
\| \rho_i - \mathcal{E}(\rho_i)\|_{Tr} = \| \rho_i - (1 - p)\rho_i - (p/3)([X]+[Y]+[Z])\rho_i([X]+[Y]+[Z]) \|_{Tr}
\end{equation}
\begin{equation}
\| \rho_i - \mathcal{E}(\rho_i)\|_{Tr} = p \| \rho_i - (1/3)([X]+[Y]+[Z])\rho_i([X]+[Y]+[Z]) \|_{Tr}
\end{equation}
But we know that the trace distance is upper bounded by $1$, so:
\begin{equation}
\| \rho_i - \mathcal{E}(\rho_i) \|_{Tr} \leq p
\end{equation}
Now we compute the trace distance between $\sigma = \otimes_{i=1}^n \rho_i$ and $\sigma' = \otimes_{i=1}^n \mathcal{E}(\rho_i)$:
\begin{equation}
\| \sigma - \sigma' \|_{Tr} =
\| \otimes_{i=1}^n \rho_i - \otimes_{i=1}^n \mathcal{E}(\rho_i) \|_{Tr} \leq \sum \limits_{i=1}^n \| \rho_i - \mathcal{E}(\rho_i)\|_{Tr}
\leq np
\end{equation}
Since the trace distance is upper bounded by $1$ and since $np$ can exceed $1$ for sufficiently large $n$, we have:
\begin{equation}
\| \sigma - \sigma' \|_{Tr} \leq min(1, np)
\end{equation}
Consider $\sigma = \otimes_{i=1}^n \Ket{0}\Bra{0}$. Under the action of the depolarizing channel the trace distance between $\sigma$ and $\sigma'$ is
$n \| \Ket{0}\Bra{0} - \mathcal{E}(\Ket{0}\Bra{0}) \|_{Tr}$. However $\| \Ket{0}\Bra{0} - \mathcal{E}(\Ket{0}\Bra{0})\|_{Tr} = \sqrt{\frac{2p}{3}}$, therefore
the distance between $\sigma$ and $\sigma'$ is $n \sqrt{\frac{2p}{3}}$. For sufficiently large $n$ this can clearly reach the maximum value of $1$.
\end{proof}
\begin{proof}[Proof of \emph{\textbf{Lemma~\ref{l:ft2}}}]
Because of the action of the partially depolarizing channel, each trap qubit has a probability $p$ of being changed.
We make the simplifying
assumption that an affected qubit will produce a wrong measurement result. This assumption is only valid for completeness, where
we assume that the devices are honest but faulty\footnote{If the devices were dishonest, we would need to take into account the
deviation on the trap qubits resulting from malevolent behaviour.}.
We then have that the probability of a trap measurement producing a correct outcome
is upper bounded by $1 - p$.
Given that trap measurements
are independent of each other, and assuming we have $N_T$ traps, the probability
that all trap measurements produce correct outcomes is upper bounded by $(1 - p)^{N_T}$.
Since the verifier accepts if and only if all trap measurements succeed it follows that the completeness is upper bounded by $(1 - p)^{N_T}$.
\end{proof}
\begin{proof}[Proof of \emph{\textbf{Lemma~\ref{l:ft3}}}]
Define the following Bernoulli random variable:
\begin{equation}
X_t =\begin{cases}
1, & \text{if measurement of trap } t \text{ fails}.\\
0, & \text{otherwise}.
\end{cases}
\end{equation}
Under the simplifying assumption of the previous lemma, we have $Pr(X_t = 1) = p \geq 0$.
Next, we define:
\begin{equation}
F = \sum_{t = 1}^{N_T} X_t
\end{equation}
It is clear that $E(F) = N_T p$. Additionally, using
a Hoeffding inequality, we have that:
\begin{equation}
Pr(F \geq (p + \epsilon) N_T) \leq \exp(-2\epsilon^2N_T)
\end{equation}
This gives the probability that the number of failed traps is greater than our threshold of $pN_T$. The complement of this is
the completeness, which is therefore bounded by $1 - \exp(-2\epsilon^2N_T)$
\end{proof}
\begin{proof}[Proof of \emph{\textbf{Lemma~\ref{l:ft4}}}]
Recall that soundness is the probability of accepting an incorrect outcome. In the original FK protocol
this meant that all the traps succeeded but the computation output was orthogonal to the correct output. This is expressed with the
projector $P_\perp \otimes P_{T}$. Here $P_\perp$ projects the computation output onto the orthogonal state and $P_{T}$ projects
the trap outputs onto the correct outputs. If the accepting condition is given by a threshold of correct traps, the projector must change
accordingly. This means that there should not be only one trap projector but one for each accepting situation.
Taking the threshold to be $p N_T$ means that the verifier accepts if $T = N_T - pN_T$ traps succeeded. Since these traps can be any combination
of $T$ out of the possible $N_T$, there are $N_T \choose T$ possible accepting situations.
Therefore, the trap projector $P_{T}$ becomes a sum of $N_T \choose T$ projectors (one for each accepting choice of traps).
It therefore follows from linearity that the soundness bound becomes ${N_T \choose T}
\left( \frac{2}{3} \right)^{\lceil \frac{2d}{5} \rceil}$.
\end{proof}

\begin{proof}[Proof of \emph{\textbf{Theorem~\ref{t:ft5}}}]
In \cite{topo}, Morimae and Fujii show how a blind quantum computation can be made fault tolerant by encoding it in a topologically protected
error-correcting code \cite{rhg}. The encoding then uses a decoration trick so that the prover only needs to perform $XY$-plane measurements
and this can be done blindly using the UBQC protocol in \cite{fk}.
Here, we use the same idea to encode a computation which also contains an isolated trap. This follows from the first verification protocol
introduced in \cite{fk} which uses a brickwork state to perform the computation. Encoding this in the fault tolerant code give us the lattice
$\mathcal{L}^{\nu}$, which according to \cite{topo} can be executed blindly by the prover. \\
Throughout the run of the protocol, if the prover is always honest then the fault tolerant code will correct for any errors (since we have
assumed the error rate is smaller than the threshold of correctable errors). This proves that the completeness of the protocol is $1$. \\
To compute the soundness, note that the computed bound for the brickwork state protocol in \cite{fk} is:
\begin{equation}
p_{incorrect} < \left( 1 - \frac{1}{2n} \right)
\end{equation}
Where $n$ is the number of qubits in the brickwork state.
Similar to the
robustness proof, the proof of this bound
assumes that the outcome density operator of the protocol is projected onto a state where the trap succeeded but the
computation outcome is incorrect.
It is can been shown that the same bound as the non-fault tolerant case holds.
This means that we have:
\begin{equation}
p_{incorrect} < \left( 1 - \frac{1}{2n'} \right)
\end{equation}
Where $n'$ is the number of qubits in a lattice $\mathcal{L}^{\nu_i}$,
out of the $N$ lattices used in the protocol. We note that $n'$ is of the same order as $n$ \cite{rhg}, and we can choose a constant $c>2$ such that $2n'=cn$. \\
In Protocol~\ref{prot:ft} the verifier creates \emph{independent} encodings $\mathcal{L}^{\nu}$, each depending on classical randomness. He accepts
the sequence of encodings if all trap measurements succeed in each encoding.
This means that the prover
can deceive the verifier if he can deviate the computation in each encoding $\mathcal{L}^{\nu}$ while at the same time passing all the
traps. However, for any given encoding we know that the probability of this happening is given by $p_{incorrect}$,
and because of independence, the prover will succeed for the
sequence with probability:
\begin{equation}
p^N_{incorrect} < \left( 1 - \frac{1}{cn} \right)^N
\end{equation}
We know that $N/R > 1/\log \left( \frac{cn}{cn - 1} \right)$.
However, this is equivalent to:
\begin{equation}
N/R > -1/\log\left( 1 - \frac{1}{cn} \right)
\end{equation}
\begin{equation}
(N/R) \log \left( 1 - \frac{1}{cn} \right) < -1
\end{equation}
Note that we used the fact that $\log \left( 1 - \frac{1}{cn} \right) < 0$.
Through exponentiation we get:
\begin{equation}
\left( 1 - \frac{1}{cn} \right)^{N/R} < \frac{1}{2}
\end{equation}
And we finally obtain:

\begin{equation}
p^N_{incorrect} < \frac{1}{2^R}
\end{equation}
Hence, the probability that the prover deceives the verifier is less than $(1/2)^R$ and so the soundness of the protocol is
upper bounded by this value. \\
Lastly we compute the round complexity of this protocol. For the given sequence we have $N$ encodings and for each encoding we have $O(n)$ qubits\footnote{Note
that here, unlike in the composite protocol, we have a linear number of qubits for the protocol. This is because we are not using the dotted-complete version of the
FK protocol, but the one using a brickwork state. It is explained in \cite{fk} that this version of the protocol requires $O(n)$ qubits.}
and a corresponding round complexity of $O(n)$ to compute the execution of that encoding. It follows that the overall complexity is $O(Nn)$.
But we know that $N < R/ \log \left( 1+\frac{1}{cn-1} \right) + O(1)$, and
given that $R$ is a constant,
we can show that $N$ is $O(n)$. This follows from the observation that dividing the function $1/\log\left(1+\frac{1}{cn-1}\right)$  with $(cn-1)$ gives a constant in the $n\rightarrow\infty$ limit:

\begin{equation}
\lim_{n\rightarrow\infty}(cn-1)\log\left(1+\frac{1}{cn-1}\right)=\frac1{\ln2}
\end{equation}
Incorporating this result yields overall complexity
$O(n^2)$.
Note that this proof technique works for the case of classical output since we are interested in the classical output of each encoding.
The encodings are independent from each other, which allows us to bound the probability for the whole sequence.
\end{proof}

\section{Conclusion}\label{sect:conclusion}

We have shown that the single server universal verifiable blind quantum computing protocol of \cite{fk} is robust even against general adversaries. This protocol is currently the 
optimal protocol in terms of the verifier's requirements. The robustness result further strengthens the scheme for realistic applications where the effect of noisy devices should also be considered, as highlighted in a recent experimental demonstration of the protocol \cite{efk}. Moreover, it enables us to compose the FK protocol with other quantum verification protocols, extend it to the entangled servers setting and make it device independent. The key property that we proved, is that the protocol remains secure even against correlated attacks. To achieve this, we considered the deviation of the evolution of a correlated subsystem from the evolution of uncorrelated subsystems. The former could be written mathematically as a non-CPTP map which differs from a CPTP map by an inhomogeneous term. However, for inputs which are $\epsilon$-close to the ideal FK input, we showed that this deviation (the inhomogeneous term) is bounded by a term of order $O(\sqrt{\epsilon})$. Our proof technique is generic and can be potentially applied to other multi-party protocols where sequential composition is required. This result complements the \emph{local}-verifiability proof of \cite{DFPR13} which is based on the universal composability framework. The latter, in its current form, is insufficient for composing entanglement-based protocols, such as RUV, with the FK protocol because of the possibility of correlated attacks. Our robustness result, however, leads to a stand alone secure composite verification protocol. Additionaly, the proposed composition scheme could potentially be used to extend the composable framework of \cite{DFPR13} to incorporate multiple provers.

Our proposed composite protocol achieves verification with a classical client (device independence) and gives improved round complexity in comparison to the RUV protocol. It uses only the (modified) state tomography part of RUV as input for the FK protocol.
The improved round complexity of the composite protocol is still too high to allow for any practical implementation
in the near future. However, the reason for this high round complexity is the state tomography subprotocol and therefore, any improvement on how to prepare the FK inputs (e.g. by exploiting the shared entanglement of the provers or using self-testing techniques as in \cite{joe}) will directly improve the efficiency of our composite protocol as well.

Finally we outlined how to make our verification protocol fault tolerant. To do so we constructed a
fault tolerant version of the FK protocol which is interesting in its own right. This complements the work presented in \cite{topo} which addresses the fault tolerance of a (non-verifiable) blind quantum computing protocol. We used the same topological error correcting code as \cite{topo} and
a sequential repetition scheme in order to correct for faulty devices.

\section*{Acknowledgements}

Shortly before uploading a prepint on the arxiv, the authors became aware of parallel and independent research by Hajdusek, Perez-Delgado and Fitzsimons, which also addresses device-independent verifiable blind quantum computing and appeared the same day in the arxiv \cite{joe}. We would like to thank Vedran Dunjko and Theodoros Kapourniotis for useful discussions. PW gratefully acknowledges partial support from COST Action MP1006.

\bibliography{report}
\bibliographystyle{unsrt}

\end{document}